\theoremstyle{plain}
\def\e#1{\emph{#1}}
\def\set#1{\mathord{\{#1\}}}
\def\prob{\mathrm{Pr}}
\def\P{\mathcal{P}}
\def\I{\mathcal{I}}
\def\K{\mathcal{K}}
\def\R{\mathcal{R}}
\def\U{\mathcal{U}}
\def\M{\mathcal{M}}
\def\D{\mathcal{D}}
\def\scs{\mathbf{S}}
\def\sig{\mathit{sig}}
\def\w{\mathbf{w}}
\newcommand{\eqdef}{\,\,{\overset{\mathrm{def}}{=}}\,\,}
\def\tids{\mathit{ids}}
\def\cells{\mathit{cells}}
\newcommand{\eat}[1]{}
\def\pudl{\mathbf{R}}
\def\tup#1{\mathbf{#1}}
\def\circs{{\circ}}
\def\tightcirc#1#2{\mathord{#1}{\circ}{#2}\,}
\def\str{^\star}
\def\vg{\mathsf{VG}}
\def\tg{\mathsf{TG}}
\def\limit{\longrightarrow}
\def\tuples{\mathrm{tuples}}
\def\const{\mathsf{Const}}
\def\dirty{J}
\def\intended{I}
\def\ra{\rightarrow}
\newenvironment{repeatresult}[2]
{\vskip0.5em\par\textsc{#1} #2.\em}
{\vskip1em}
\newenvironment{repproposition}[1]{\begin{repeatresult}{Proposition}{#1}}{\end{repeatresult}}
\newenvironment{reptheorem}[1]{\begin{repeatresult}{Theorem}{#1}}{\end{repeatresult}}
\title{A Formal Framework for Probabilistic Unclean Databases}
\author{Christopher De Sa}{Cornell University, Ithacan, NY, USA}{cdesa@cs.cornell.edu}{}{}
\author{Ihab F. Ilyas}{University of Waterloo, Waterloo, ON, Canada}{ilyas@uwaterloo.ca}{}{This work was supported by NSERC under a Discovery Grant.}
\author{Benny Kimelfeld}{Technion - Israel Institute of Technology, Haifa, Israel}{bennyk@cs.technion.ac.il}{}{This work was supported by the Israel Science Foundation (ISF) Grant 1295/15.}
\author{Christopher R{\'e}}{Stanford University, Stanford, CA, USA}{chrismre@cs.stanford.edu}{}{}
\author{Theodoros Rekatsinas}{University of Wisconsin - Madison, Madison, WI, USA}{thodrek@cs.wisc.edu}{}{This work was supported by the Wisconsin Alumni Association, Amazon under an ARA Award, and by NSF under grant IIS-1755676.}
\authorrunning{De Sa, Ilyas, Kimelfeld, R{\'e}, and Rekatsinas}%mandatory. First: Use abbreviated first/middle names. Second (only in severe cases): Use first author plus 'et. al.'
\keywords{Unclean databases, data cleaning, probabilistic databases, noisy channel}%mandatory
\begin{document}

\maketitle

\begin{abstract}
  Most theoretical frameworks that focus on data errors and
  inconsistencies follow logic-based reasoning. Yet, practical data
  cleaning tools need to incorporate statistical reasoning to be
  effective in real-world data cleaning tasks. Motivated by empirical
  successes, we propose a formal framework for unclean databases,
  where two types of statistical knowledge are incorporated: The first
  represents a belief of how intended (clean) data is generated, and
  the second represents a belief of how noise is introduced in the
  actual observed database. To capture this noisy channel model, we
  introduce the concept of a Probabilistic Unclean Database (PUD), a
  triple that consists of a probabilistic database that we call the
  \emph{intention}, a probabilistic data transformator that we call
  the \emph{realization} and captures how noise is introduced, and an
  observed unclean database that we call the \emph{observation}. We
  define three computational problems in the PUD framework: cleaning
  (infer the most probable intended database, given a PUD),
  probabilistic query answering (compute the probability of an answer
  tuple over the unclean observed database), and learning (estimate
  the most likely intention and realization models of a PUD, given
  examples as training data). We illustrate the PUD framework on
  concrete representations of the intention and realization, show that
  they generalize traditional concepts of repairs such as cardinality
  and value repairs, draw connections to consistent query answering,
  and prove tractability results. We further show that parameters can
  be learned in some practical instantiations, and in fact, prove that
  under certain conditions we can learn a PUD directly from a single
  dirty database without any need for clean examples.
\end{abstract}

%%% Introduction %%%
\section{Introduction}
\label{sec:intro}

Managing errors and inconsistency in databases is traditionally viewed
as a challenge of a logical nature.  It is typical that errors in a
database are defined with respect to \e{integrity constraints} that
capture normative aspects of downstream applications. The aim of
integrity constraints is to guarantee the consistency of data used by
these applications.  Typically, an unclean database is defined as a
database $\dirty$ that violates the underlying set of integrity
constraints.  In turn, a \e{repair} of a database $\dirty$ is a clean
database $\intended$ wherein all integrity constraints hold, and is
obtained from $\dirty$ by a set of operations (e.g., deletions of
tuples or updates of tuple values) that feature some form of
non-redundancy~\cite{DBLP:conf/pods/ArenasBC99,DBLP:conf/icdt/AfratiK09}.

Various computational problems around unclean databases have been
investigated in prior
work~\cite{DBLP:conf/pods/ArenasBC99,Lenzerini:2002:DIT:543613.543644,DBLP:conf/icdt/LopatenkoB07,
  Libkin:2014:IDW:2594538.2594561}.  Past theoretical research has
established fundamental results that concentrate on tractability
boundaries for repair-checking and consistent query
answering~\cite{DBLP:conf/icdt/AfratiK09,Fagin:2015:DCP:2745754.2745762,DBLP:journals/tods/KoutrisW17}.
In their majority, these works adopt a deterministic interpretation of
data repairs and cast all repairs equally likely.  These theoretical
developments have inspired practical tools that aim to automate data
cleaning~\cite{DBLP:conf/icde/BohannonFGJK07, YakoutENOI11,
  DBLP:conf/icde/xu13, DBLP:conf/sigmod/WangT14,Ilyas2016EffectiveDC}.
The majority of proposed methods assume as input a set of integrity
constraints and use those to identify possible repairs via
search-based procedures.  To prioritize across possible repairs during
search, the proposed methods rely on the notion of {\em
  minimality}~\cite{chomicki2005minimal,
  DBLP:conf/icdt/KolahiL09,DBLP:conf/pods/LivshitsKR18}.  Informally,
minimality states that given two candidate sets of repairs, the one
with fewer changes with respect to the original database is
preferable.  The use of minimality as an operational principle to find
data repairs is a practical artifact that is used to limit the search
space.  These approaches to data cleaning suffer from two major
drawbacks: First, they do not permit concrete statements about the
``likelihood'' of possible repairs.  Consequently, they categorize
query answers to a limited set of validity labels (e.g., certain,
possible, and impossible); these labels might be unsuitable for
downstream applications.  Second, combinatorial principles such as
minimality, while desired, do not entail the richness of the arguments
and evidences (e.g., statistical features of data) that are needed to
reason about and generate correct repairs.

Effective data cleaning needs to incorporate statistical reasoning.
Our recent work on HoloClean~\cite{holoclean} casts data repairing as
a statistical learning and inference problem and reasons about a
\e{most probable} repair instead of a \e{minimal} repair.  Our study
shows that HoloClean obtains more accurate data cleaning results than
competing minimality-based data cleaning tools for a diverse array of
real-world data cleaning scenarios~\cite{holoclean}.  HoloClean uses
training data to learn a probabilistic model for how clean data is
generated and how data errors are injected.  HoloClean's model follows
the {\em noisy channel model}~\cite{Jurafsky:2009:SLP:1214993}, the
de-facto probabilistic framework used in natural language tasks, such
as spell checking and speech recognition, to reason about noisy data.
To the best of our knowledge, existing theoretical frameworks for data
cleaning do not capture this type of probabilistic reasoning.

\subparagraph*{Goals}
We aim to establish a formal framework for \e{probabilistic unclean
  databases} (PUD) that adopts a statistical view of database
cleaning.  We do so by following the aforementioned noisy channel
paradigm of HoloClean.  Within the PUD framework, we formalize
fundamental computational problems: \e{cleaning}, \e{query answering},
and \e{learning}.  With that, we aim to draw connections between
theoretical database research and important aspects of practical
systems.  In particular, our goal is to open the way for analyses and
algorithms with theoretical guarantees for such systems.  We argue
that our framework is basic enough to allow for nontrivial theoretical
advances, as illustrated by our preliminary results that \e{(a)} draw
connections to traditional deterministic concepts, \e{and (b)} devise
algorithms for special cases.

\subparagraph*{Probabilistic unclean databases} We view an unclean
database as if a clean database $I$ had been ``distorted'' via a noisy
channel into a dirty database $J$; we aim to establish a model of this
channel.  Given the observed unclean database $J$, we seek the true
database $I$ from which $J$ is produced.  This model adopts Bayesian
inference: out of all possible $I$, we seek the one for which the
probability, given $J$, is highest.  Following Bayes' rule, our
objective is to find $\arg\max_{I}\prob(I)\cdot\prob(J|I)$.  This
objective decomposes in two parts: (1) the \e{prior} model for a clean
database captured by $\prob(I)$, and (2) the \e{channel} or \e{error}
model characterized by $\prob(J|I)$.  To capture that, we define a
\e{Probabilistic Unclean Database} (\e{PUD}) as a triple
$(\I,\R,\dirty^\star)$ where: (1) $\I$, referred to as the
\e{intention model}, is a distribution that produces intended clean
databases; (2) $\R$, referred to as the \e{realization model}, is a
function that maps each clean database $I$ to a distribution $\R_{I}$
that defines how noise is introduced into $I$; and (3) $J\str$ is an
observed unclean database. The distribution $\I$ defines the prior
$\prob(I)$ over clean databases, while the distribution $\R_{I}$
defines the aforementioned noisy channel $\prob(J|I)$.

\subparagraph*{Computational problems} We define and study three
computational problems in the PUD framework: (1) \e{data cleaning},
where given a PUD $(\I,\R,\dirty)$, we seek to compute a database
$\intended$ that maximizes the probability
$\I(\intended)\times\R_{\intended}(\dirty)$; (2) \e{probabilistic
  query answering}, that is, the problem of evaluating a query $Q$
over a PUD following the traditional possible tuple
semantics~\cite{DBLP:conf/vldb/DalviS04,Suciu:2011:PD:2031527}; and
(3) \e{learning} a PUD, where we consider \e{parametric}
representations $\I_{\Xi}$ and $\R_{\Theta}$ of the intention and
realization models, and seek to estimate the parameter vectors $\Xi^*$
and $\Theta^*$ that maximize the likelihood of training data.

\subparagraph*{Preliminary analysis} PUDs allow for different
instantiations of the intention and realization models.  To establish
preliminary complexity and convergence results, we focus on specific
instantiations of the intention and realization models.  We study
intention models that can describe the distribution of tuple values as
well as both soft and hard integrity constraints.  We also focus on
simple noise models. We study (1) realizations that introduce new
tuples, hence, the clean database is a subset of the observed unclean
database, and (2) realizations that update table cells, hence, the
clean database is obtained via value repairs over the observed unclean
database.

We present PUD instantiations for which solving the data cleaning
problem has polynomial-time complexity. For instance, we show that in
the presence of only one key constraint, soft or hard, data cleaning
in PUDs can be solved in polynomial time. This result extends results
for deterministic repairs that focus on hard integrity constraints to
\e{weak} (soft) key constraints (e.g., two people are unlikely to, but
might, have the same first and last name). Here, the most probable
repair under the PUD framework may violate weak key constraints.  We
also draw connections between data cleaning in the PUD framework and
\e{minimal repairs}. We identify conditions under which data cleaning
in the PUD framework is equivalent to \e{cardinality
  repairs}~\cite{DBLP:conf/icdt/LopatenkoB07} and \e{optimal
  V-repairs}~\cite{DBLP:conf/icdt/KolahiL09}.  For PUD learning, we
consider both \e{supervised} and \e{unsupervised} learning.  In the
former case, we are given intension-realization pairs, and in the
former, we are given only realizations (i.e., dirty databases). Our
results discuss convexity and gradient computation for the
optimization problem underlying the learning problem.

Our PUD model can be viewed as a generalization of the approach of
Gribkoff et al.~\cite{GVSBUDA14}, who view the dirty database as a
tuple-independent probabilistic
database~\cite{DBLP:conf/vldb/DalviS04}, and seek the \e{most-probable
  database} that satisfies a set of underlying integrity constraints
(e.g., functional dependencies). In contrast, our modeling allows for
arbitrary distributions over the intention, including ones with
\e{weak} constraints that we discuss later on. Interestingly, our PUD
model goes in the reverse direction of the \e{operational} approach of
Calautti et al.~\cite{DBLP:conf/pods/CalauttiLP18}, who view the dirty
database as a deterministic object and its \e{cleaning} (rather than
the \e{error}) as a probabilistic process (namely a Markov chain of
repairing operations).

\subparagraph*{Vision} This paper falls within the bigger vision of
bridging database theory with learning theory as outlined in a recent
position article~\cite{DBLP:journals/sigmod/AbiteboulABBCDH16}. We aim
to draw connections between the rich theory on inconsistency
management by the database community, and fundamentals of statistical
learning theory with emphasis on \e{structured
  prediction}~\cite{Bakir:2007:PSD:1296180}. Structured prediction
typically focuses on problems where, given a collection of
observations, one seeks to predict the most likely assignment of
values to structured objects. In most practical structured prediction
problems, structure is encoded via logic-based
constraints~\cite{Globerson:2015:HIS:3045118.3045350} in a way similar
to how consistency is enforced in data cleaning. It is our hope that
this paper will commence a line of work towards theoretical
developments that take the benefit of both worlds, and will lead to
new techniques that are both practical and rooted in strong
foundations.

\subparagraph*{Organization} We begin with preliminary definitions in Section~\ref{sec:prelims}. In Section~\ref{sec:puds} we present the concept of PUDs. We present the three fundamental computational problems in Section~\ref{sec:problems}, and describe preliminary results in Sections~\ref{sec:mlipqa} and~\ref{sec:learning}. We conclude with a discussion in Section~\ref{sec:conclusion}. For space limitations, all proofs are in the Appendix of our paper.

%%% Preliminaries %%%%
\section{Preliminaries}

\label{sec:prelims}
We first introduce concepts, definitions and notation that we need
throughout the paper. 
 
%\subsubsection*{Relational Model}

\subparagraph*{Schemas and databases} A \e{relation signature} is a
sequence $\alpha=(A_1,\dots,A_k)$ of distinct \e{attributes} $A_i$,
where $k$ is the \e{arity} of $\alpha$. A (\e{relational}) \e{schema}
$\scs$ has a finite set of \e{relation symbols}, and it associates
each relation symbol $R$ with a signature that we denote by
$\sig_\scs(R)$, or just $\sig(R)$ if $\scs$ is clear from the context.
We assume an infinite domain $\const$ of \e{constants}. Let $\scs$ be
a schema, and let $R$ be a relation symbol of $\scs$. A \e{tuple} $t$
over $R$ is a sequence $(c_1,\dots,c_k)$ of constants, where $k$ is
the arity of $\sig(R)$. If $t=(c_1,\dots,c_k)$ is a tuple over $R$ and
$\sig(R)=(A_1,\dots,A_k)$, then we refer to the value $c_j$ as $t.A_j$
(where $j=1,\dots,k$). We denote by $\tuples(R)$ the set of all tuples
over $R$.

In our databases, tuples have unique record identifiers. Formally, a
table $r$ over $R$ is associated with a finite set $\tids(r)$ of
identifiers, and it maps each identifier $i$ to a tuple $r[i]$ over
$R$. A \e{database} $I$ over $\scs$ consists of a table $R^I$ over
each relation symbol $R$ of $\scs$, such that no two occurrences of
tuples have the same identifier; that is, if $R_1$ and $R_2$ are
distinct relation symbols in $\scs$, then $\tids(R_1^I)$ and
$\tids(R_2^I)$ are disjoint sets. We denote by $\tids(I)$ the union of
the sets $\tids(R^I)$ over all relation symbols $R$ of $\scs$. If
$i\in\tids(R^I)$, then we may refer to the tuple $R^I[i]$ simply as
$I[i]$.

A \e{cell} of a database $I$ is a pair $(i,A)$, where $i\in\tids(R^I)$
for a relation symbol $R$, and $A$ is an attribute inside
$\sig(R)$. We denote the cell $(i,A)$ also by $i.A$, and we denote by
$\cells(I)$ the set of all cells of $I$.

Let $I$ and $J$ be databases over the same schema $\scs$. We say that
$I$ is a \e{subset} of $J$ if $I$ can be obtained from $J$ by deleting
tuples, that is, $\tids(R^I)\subseteq\tids(R^J)$ for all relation
symbols $R$ of $\scs$ (hence, $\tids(I)\subseteq\tids(J)$) and
$I[i]=J[i]$ for all $i\in\tids(I)$. We say that $I$ is an \e{update}
of $J$ if $I$ can be obtained from $J$ by changing attribute values,
that is, $\tids(R^J)=\tids(R^I)$ for all relation symbols $R$ of
$\scs$.

A \e{query} $Q$ over a schema $\scs$ is associated with fixed arity,
and it maps every database $D$ over $\scs$ into a finite set $Q(D)$ of
tuples of constants over the fixed arity.

%\subsubsection*{Constraints and Repairs}

\subparagraph*{Integrity constraints}
Various types of logical conditions are used for declaring integrity
constraints, including \e{Functional Dependencies} (FDs),
\e{conditional FDs}~\cite{DBLP:conf/icde/BohannonFGJK07}, \e{Denial
  Constraints} (DCs)~\cite{DBLP:journals/jiis/GaasterlandGM92},
referential constraints~\cite{Date:1981:RI:1286831.1286832}, and so
on. In this paper, by \e{integrity constraint} over a schema $\scs$ we
refer to a general expression $\varphi$ of the form
$\forall{x_1,\dots,x_m}[\gamma(x_1,\dots,x_m)]$, where
$\gamma(x_1,\dots,x_m)$ is a safe expression in Tuple Relational
Calculus (TRC) over $\scs$.  For example, an FD $R:A\ra B$ is
expressed here as the integrity constraint
\[\forall x,y \left[(x\in R y\in R) \ra (x.A=y.A \ra x.B=y.B)\right]\,.\]
A \e{violation} of
$\varphi=\forall{x_1,\dots,x_m}[\gamma(x_1,\dots,x_m)]$ in the
database $I$ is a sequence $i_1,\dots,i_m$ of tuple identifiers in
$\tids(I)$ such that $I$ violates $\gamma(I[i_1],\dots,I[i_m])$, and
we denote by $V(\varphi,I)$ the set of violations of $\varphi$ in $I$.
We say that $I$ \e{satisfies} $\varphi$ if $I$ has no violations of
$\varphi$, that is, $V(\varphi,I)$ is empty. Finally, $I$
\e{satisfies} a set $\Phi$ of integrity constraints if $I$ satisfies
every integrity constraint $\varphi$ in $\Phi$.

\subparagraph*{Minimum repairs}
Traditionally, database \e{repairs} are defined over inconsistent
databases, where \e{inconsistencies} are manifested as violations of
integrity constraints. A repair is a consistent database that is
obtained from the inconsistent one by applying a \e{minimal} change,
and we recall two types of repairs: \e{subset} (obtained by deleting
tuples) and \e{update} (obtained by changing values). Moreover, the
repairing operations may be weighted by tuple weights (in the first
case) and cell weights (in the second case). 

Formally, let $\scs$ be a schema, $\Phi$ a set of integrity
constraints over $\scs$, and $J$ a database that does not necessarily
satisfy $\Phi$. A \e{consistent subset} (resp., \e{consistent update})
of $J$ is a subset (resp., update) $I$ of $J$ such that $I$ satisfies
$\Phi$. A \e{minimum subset repair} of $J$ w.r.t.~a weight function
$w:\tids(J)\rightarrow[0,\infty)$ is a consistent subset $I$ of $J$
that minimizes the sum $\sum_{i\in\tids(J)\setminus\tids(I)}w(i)$.  As
a special case, a \e{cardinality repair} of $J$ is a minimum subset
repair w.r.t.~a constant weight (e.g., $w(i)=1$), that is, a
consistent subset with a maximal number of tuples.  A \e{minimum
  update repair} of $J$ w.r.t.~a weight function
$w:\cells(J)\times\const\rightarrow[0,\infty)$ is a consistent update
$I$ of $J$ that minimizes the sum
$\sum_{i.A\in\cells(I)}w(i.A,I[i].A)$.

\subparagraph*{Probabilistic databases} A \e{probabilisitic database}
is a probability distribution over ordinary databases. As a
representation system, our model is a generalization of the
\e{Tuple-Independent probabilistic Database} (TID) wherein each tuple
might either exist (with an associated probability) or
not~\cite{DBLP:conf/vldb/DalviS04,Suciu:2011:PD:2031527}. In our
model, each tuple comes from a general probability distribution over
tuples (where inexistence is one of the options). This allows us to
incorporate beliefs about the likelihood of tuples and cell values.

We now give the formal definition. Let $\scs$ be a schema. A
\e{generalized TID} is a database $\K$ that is defined similarly to an
ordinary database over $\scs$, except that instead of a tuple, the
entry $R^{\K}[i]$ is a discrete probability distribution over the set
$\tuples(R)\cup\set{\bot}$, where the special value $\bot$ denotes
that no tuple is generated. Hence, for every tuple $t$ over $R$, the
probability that $R^{\K}[i]$ produces $t$ is given by $R^{\K}[i](t)$,
or just $\K[i](t)$; moreover, the number $\K[i](\bot)$ is the
probability that no tuple is generated for the identifier
$i$. Therefore, $\K$ defines a probability distribution over databases
$I$ over $\scs$ such that $\tids(I)\subseteq\tids(\K)$ and the
probability $\K(I)$ of a database $I$ is defined as follows:
\[\K(I)\eqdef \prod_{i\in\tids(I)}\K[i](I[i])\,\,\times\,\,
\prod_{i\in\tids(\K)\setminus\tids(I)}\hspace{-1.5em}\K[i](\bot)\]

We incorporate weak integrity constraints by adopting the standard
concept of \e{parametric factors} (or \e{parfactors} for short), which
has been used in the \e{soft keys} of Jha et
al.~\cite{DBLP:conf/pods/JhaRS08} and the \e{PrDB} model of Sen et
al.~\cite{DBLP:journals/vldb/SenDG09}, and which can be viewed as a
special case of the \e{Markov Logic Network}
(MLN)~\cite{Richardson:2006:MLN:1113907.1113910}. Under this concept,
each constraint $\varphi$ is associated with a weight $w(\varphi)>0$
and each violation of $\varphi$ contributes a factor of
$\exp(-w(\varphi))$ to the probability of a random database $I$.
Formally, a \e{parfactor database} over a schema $\scs$ is a triple
$\D=(\K,\Phi,w)$, where $\K$ is a generalized TID, $\Phi$ is a finite
set of integrity constraints, both over $\scs$, and
$w:\Phi\ra(0,\infty)$ is a weight function over $\Phi$.\eat{Compared
  to an MLN, a parfactor database allows to separate the \e{data}
  (i.e., $\K$) from the \e{constraints} (i.e., $\Phi$) for the sake of
  complexity analysis. For example, to model a TID, an MLN requires a
  constraint for each tuple, and a parfactor database requires no
  constraints at all.} The probability $\D(I)$ of a database $I$ is
defined as follows.
\[ \D(I) \eqdef \frac{1}{Z}\,\,\times\,\, \K(I)\,\,\times\,\,
\exp\left(
-\sum_{\varphi\in\Phi}
w(\varphi)\times |V(\varphi,I)|\right)
\]
Recall that $V(\varphi,I)$ the set of violations of $\varphi$ in $I$.
The number $Z$ is a \e{normalization factor} (also called the
\e{partition
  function}) that normalizes the sum of probabilities to one:
\[Z\eqdef 
\sum_{I}\K(I)\,\times\,
\exp\left(
-\sum_{\varphi\in\Phi}
w(\varphi)\times |V(\varphi,I)|\right)
\]
Observe that the above sum is over a countable domain, since we assume
that every $R^{\K}[i]$ is discrete (hence, there are countably many
random databases $I$). Since we normalize the probability, it is not
really necessarily for $\K$ to be normalized, as $\D$ would be a
probability distribution even if $\K$ is \e{not} normalized. In fact,
in our analysis, we will \e{not} make the assumption that $\K$ is
normalized.

\begin{table}
  \caption{\label{table:symb}Main symbols used in the framework.}
\scriptsize
\renewcommand{\arraystretch}{1.4}
\begin{tabular}{c|l}\hline
  $\scs$ & A schema. \\
  $\U$ & A PUD $(\I,\R,J\str)$.\\
  $\I$ & An intention model (probabilistic database).\\
  $\R$ & A realization model, maps every $I$ into a probabilistic database $\R_I$. \\
  $J\str$ & An observed unclean database.\\
  $\R\circs\I$ & Distribution  over pairs $(I,J)$ given by $\tightcirc{\R}{\I}(I,J)=\I(I)\cdot\R_{I}(J)$.\\
  $\U\str$ &  A probabilistic database given by $\U\str(I')=\prob_{(I,J)\sim\R\circ\I}(I=I'\mid J=J\str)$.\\
$(\D,\tau,J\str)$ & A parfactor/subset PUD.\\
$(\D,\kappa,J\str)$ & A parfactor/update PUD.\\
  $\D$ & A parfactor database $(\K,\Phi,w)$ with $w:\Phi\ra(0,\infty)$.\\
  $\K$ & A generalized tuple-independent database (generalized TID).\\
  $\Phi$ & A set of integrity constraints $\varphi$.\\
  $\tau$ & Maps $i\in\tids(R^{\D})$ to a discrete distribution $\tau[i]$ over $\tuples(R)\cup\set{\bot}$.\\
  $\kappa$ & Maps $(i,t)\in\tids(R^{\D})\times\tuples(R)$ to a discrete distribution $\kappa[i,t]$ over $\tuples(R)$.\\
\hline
\end{tabular}
\end{table}

\section{Probabilistic Unclean Databases}\label{sec:puds}

We introduce the Probabilistic Unclean Database (PUD) framework and
describe examples of PUD instantiations that correspond to data
cleaning applications in the HoloClean system~\cite{holoclean}.  In
our framework, a PUD consists of three components following a
noisy-channel model: \e{(1)} an \e{intention} model for generating
clean databases, \e{(2)} a noisy \e{realization} model that can
distort the intended clean database, \e{and (3)} an observed unclean
database. The formal definition follows.

\begin{definition} Let $\scs$ be a schema.  A PUD (over $\scs$) is a
  triple $\U =(\I,\R,J\str)$ where:
\begin{enumerate}
\item $\I$ is a probabilistic database, referred to as the
  \e{intention} model;
\item $\R$, referred to as the \e{realization} model, is a function
  that maps each database $I$ to a probabilistic database $\R_I$;
\item $J\str$ is a database referred to as the \e{observed} or
  \e{unclean} database.
\end{enumerate}
\end{definition}

\begin{figure}[t]
  \centering
  \includegraphics[width=1\textwidth]{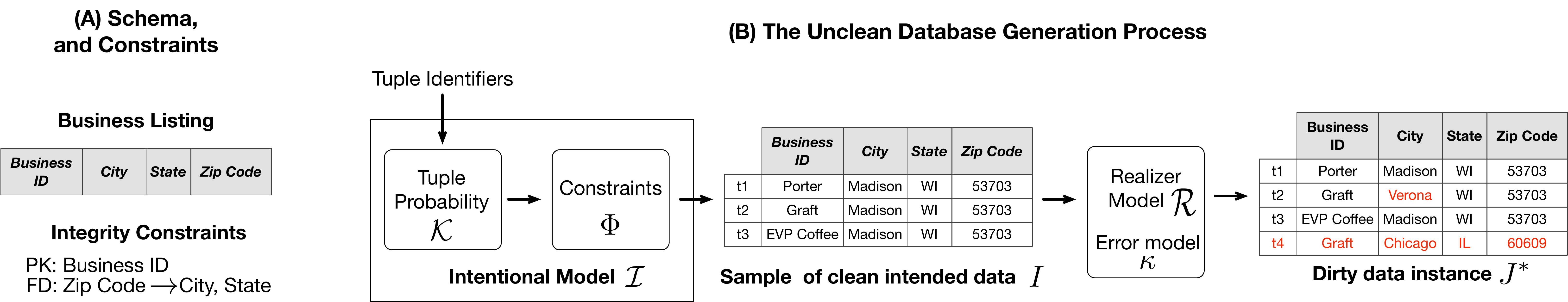}
  \caption{Overview of the PUD framework.}
  \label{fig:pudex}
\end{figure}

\begin{example}\label{ex:puds}
  Figure~\ref{fig:pudex} illustrates a high-level example of the PUD
  framework.  We use a running example from business
  listings. Figure~\ref{fig:pudex}(A) depicts the schema $\scs$ of the
  example. The constraints include a primary key and a functional
  dependency. Figure~\ref{fig:pudex}(B) depicts the unclean database
  generation process. Intention $\I$ outputs a valid database $I$ with
  three tuples. The realizer $\R$ takes as input this database $I$,
  injects the new tuple $t4$ and updates the \texttt{City} value of
  tuple $t2$ from ``Madison'' to ``Verona.'' \qed
\end{example}

A PUD $\U = (\I,\R,J\str)$ defines a probability distribution, denoted
$\R\circs\I$, over pairs $(I,J)$. Conditioning on $J=J\str$, the PUD
$\U$ also defines a probability distribution, denoted $\U\str$, over
intentions $I$ (i.e., a probabilistic database).  In the generative
process of $\R\circs\I$, we sample the intention $I$ from $\I$, and
then we sample $J$ from the realization $\R_I$. Hence, the probability
of $(I,J)$ is given by 
\[\tightcirc{\R}{\I}(I,J)\eqdef\I(I)\cdot\R_{I}(J)\,.\]
In the probabilistic database $\U\str$, the probability of each
candidate intention $I'$ is given by
\[\U\str(I')\,\,\eqdef\,\,\prob_{(I,J)\sim\R\circ\I}(I=I'\mid J=J\str)
\,\,=\,\,\frac{\tightcirc{\R}{\I}(I',J\str)}{\sum_{I}\tightcirc{\R}{\I}(I,J\str)}
\]
that is, the probability conditioned on the random $J$ being $J\str$.
For this distribution to be well defined, we require $J\str$ to have
a nonzero probability; that is, there exists $I$ such that
$\tightcirc{\R}{\I}(I,J\str)>0$. Table~\ref{table:symb} lists the main symbols in the framework, along with their meaning.

%\begin{example}

%\subsection{A Formal Definition of Probabilistic Unclean Databases}

%We now introduce the formal definition of a \e{Probabilistic Unclean
%  Database} (PUD).

\subsection{Example Instantiations of PUDs}\label{sec:applications}
\label{sec:pudexamples}
Our definition of a PUD is abstract, and not associated with any
specific representation model.  We now present concrete instantiations
of PUD representations. These instantiations are probabilistic
generalizations of the (deterministic) concepts of \e{subset
  repairs}~\cite{DBLP:conf/icdt/LopatenkoB07,DBLP:conf/icdt/AfratiK09}
and \e{update repair}~\cite{DBLP:conf/icdt/KolahiL09,DBLP:conf/pods/LivshitsKR18},
respectively. More precisely, in both instantiations, the PUD $\U =
(\I,\R,J\str)$ is such that $\I$ is represented as a parfactor
database $\D$ (as defined in Section~\ref{sec:prelims}) and $J\str$ is
an ordinary database (as expected); the two differ in the
representation of the realization model $\R$. In the first
instantiation, $\R$ is allowed to introduce new random tuples (hence,
the intended database is a \e{subset} of the unclean one) and in the
second, $\R$ is allowed to randomly change tuples (hence, the intended
database is an \e{update} of the unclean one).  Formally, let $\scs$
be a schema.
\begin{itemize}
\item A \e{parfactor/subset} PUD is a triple $(\D,\tau,J\str)$ where
  $\D$ is a parfactor database, $\tau$ maps every identifier
  $i\in\tids(R^{\D})$, where $R\in\scs$, to a discrete distribution
  $\tau[i]$ over $\tuples(R)\cup\set{\bot}$, and $J\str$ is an
  ordinary database. As usual, $\bot$ means that no tuple is
  generated.
\item A \e{parfactor/update} PUD is a triple $(\D,\kappa,J\str)$ where
  $\D$ is a parfactor database, $\kappa$ maps every identifier
  $i\in\tids(R^{\D})$ and tuple $t\in\tuples(R)$, where $R\in\scs$, to
  a discrete distribution $\kappa[i,t]$ over $\tuples(R)$, and $J\str$
  is an ordinary database.
\end{itemize}

In a parfactor/subset PUD $\U=(\D,\tau,J\str)$, the probability
$\tightcirc{\R}{\I}(I,J)$ is then defined as follows. If $I$ is not a
subset of $J$, then $\tightcirc{\R}{\I}(I,J)=0$; otherwise:
\begin{align*}
\tightcirc{\R}{\I}(I,J)\eqdef
\D(I)\times
\prod_{\substack{i\in\tids(J)\setminus\\\tids(I)}}
\hskip-1em\tau[i](J[i])
\times 
\prod_{\substack{i\in\tids(\D)\setminus\\\tids(J)}}
\hskip-1em\tau[i](\bot) 
\end{align*}
That is, $\tightcirc{\R}{\I}(I,J)$ is the probability of $I$ (i.e.,
$\D(I)$), multiplied by the probability that each new tuple of $J$ is
produced by $\tau$ (i.e., $\tau[i](J[i])$), multiplied by the
probability that each tuple identifier $i$ missing in $J$ is indeed
not produced (i.e., $\tau[i](\bot)$).

In a parfactor/update PUD $\U=(\D,\kappa,J\str)$, the probability
$\tightcirc{\R}{\I}(I,J)$ is then defined as follows. If $I$ is not an
update of $J$, then $\tightcirc{\R}{\I}(I,J)=0$; otherwise:
\begin{align*}
\tightcirc{\R}{\I}(I,J)\eqdef
\D(I)\times 
\prod_{\substack{i\in\tids(I)}} 
\hskip-0.5em
\kappa[i,I[i]](J[i])
\end{align*}
That is, $\tightcirc{\R}{\I}(I,J)$ is the probability of $I$ (i.e.,
$\D(I)$), multiplied by the probability that $\kappa$ changes each
tuple $I[i]$ to $J[i]$ (i.e., $\kappa[i,I[i]](J[i])$). 

\begin{figure}[t]
  \centering
  \includegraphics[width=1\textwidth]{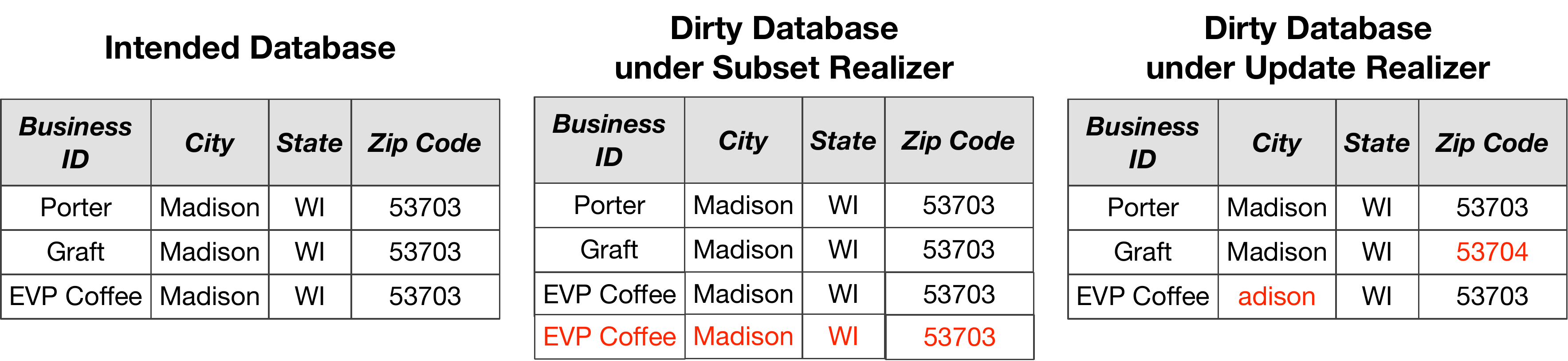}
  \caption{Examples of a subset realizer and an update realizer.}
  \label{fig:realex}
\end{figure}

\begin{example}
  Figure~\ref{fig:realex} shows the intended database from
  Example~\ref{ex:puds} and two unclean versions obtained by a subset
  realizer and an update realizer. The subset realizer introduces a
  duplicate, while the update realizer introduces two typos. These
  correspond to two types of common errors in relational data. Our PUD
  framework can naturally model such cases.\qed
\end{example}

In Section~\ref{sec:mlipqa}, we discuss connections between these PUD instantiations and the deterministic models. Finally, in Section~\ref{sec:learning}, we provide more concrete cases of PUD instantiations.

\section{Computational Problems}
\label{sec:problems}
We define three computational problems over PUDs that are motivated by
the need to clean and query unclean data, and learn the intention and
realization models from observed data.

\subparagraph*{Data Cleaning} 
Given a PUD $(\I,\R,J\str)$, we wish to compute a Most Likely
Intention (MLI) database $I$, given the observed unclean database
$J\str$. We refer to this problem as \e{data cleaning} in PUDs.

\begin{definition}[Cleaning]
  Let $\scs$ be a schema and $\pudl$ a representation system for PUDs.
  The problem \e{$(\scs,\pudl)$-cleaning} is that of computing an
  \e{MLI} of a given PUD $\U =(\I,\R,J\str)$, that is, computing a
  database $I$ such that the probability $\U\str(I)$ is maximal (or,
  equivalently, the probability $\tightcirc{\R}{\I}(I,J\str)$ is maximal).
\end{definition}

\subparagraph*{Probabilistic query answering}
A PUD defines a probabilistic database---a probability space over the
intensions $I$.  The problem of \e{Probabilistic Query Answering}
(\e{PQA}) is that of evaluating a query over this probabilistic
database. We adopt the standard semantics of query evaluation over
probabilistic
databases~\cite{DBLP:conf/vldb/DalviS04,Suciu:2011:PD:2031527}, where
the confidence in an answer tuple is its marginal probability.

\begin{definition}[PQA]
  Let $\scs$ be a schema, $Q$ a query over $\scs$, and $\pudl$ a
  representation system for PUDs. The problem \e{$(\scs,Q,\pudl)$-PQA}
  is the following.  Given a PUD $\U$ and a tuple $\tup a$, compute
  the \e{confidence} of $\tup a$, that is, the probability
  $\prob_{I\sim\U\str}(\tup a\in Q(I))$.
\end{definition}

For now, we assume that both $\I$ and $\R$ are fully specified. We
next define the problem of learning models $\I$ and $\R$ using
training (potentially labeled) data.

\subparagraph*{PUD learning} For a PUD $(\I,\R,J\str)$, the models $\I$ and
$\R$ are typically represented using numeric \e{parameters}. For
example, the parameters of a parfactor/subset PUD $(\D,\tau,J\str)$
are those needed to represent $\D$ (e.g., the weights of the
constraints), and the parameters that define the distributions over
the tuples in both $\D$ and $\tau$. By a \e{parametric intention} we
refer to an intension model $\I_{\Xi}$ with a vector $\Xi$
of uninitialized parameters, and by $\I_{\Xi/\mathbf{c}}$ we denote the
actual intention model where $\Xi$ is assigned the values in the
vector $\mathbf{c}$. Similarly, by a \e{parametric realization} we
refer to a realization model $\R_{\Theta}$ with a vector $\Theta$ of
uninitialized parameters, and by $\R_{\Theta/\mathbf{d}}$ we denote
the actual realization model where $\Theta$ is set to $\mathbf{d}$.

Following the concept of \e{maximum likelihood estimation}, the goal
in learning is to find the parameters that best explain (i.e.,
maximize the probability) of the training examples. In the
\e{supervised} variant, we are given examples of both unclean
databases and their clean versions; in the \e{unsupervised} variant,
we are given only unclean databases.

\begin{definition}[Learning]\label{def:learning}
  Let $\scs$ be a schema, and $\pudl$ a representation system for
  parametric intensions and realizations. In the following problems we
  are given, as part of the input, the parametric intention and
  realization models $\I_{\Xi}$ and $\R_{\Theta}$, respectively.
\begin{itemize}
\item In the \e{supervised $(\scs,\pudl)$-learning} problem, we are
  also given a collection $(I_j,J_j)_{j=1}^n$ of database pairs
  (intention-realization examples), and the goal is to find parameter
  values $\mathbf{c}$ and $\mathbf{d}$ that maximize
  $\prod_{j=1}^n\tightcirc{\R}{\I}(I_j,J_j)$ for
  $\I=\I_{\Xi/\mathbf{c}}$ and $\R=\R_{\Theta/\mathbf{d}}$.
\item In the \e{unsupervised $(\scs,\pudl)$-learning} problem, we are
  also given a collection $(J_j)_{j=1}^n$ of databases (realization
  examples), and the goal is to find parameter values $\mathbf{c}$ and
  $\mathbf{d}$ that maximize $\prod_{j=1}^n\tightcirc{\R}{\I}(J_j)$
  for $\I=\I_{\Xi/\mathbf{c}}$ and $\R=\R_{\Theta/\mathbf{d}}$, where
  $\tightcirc{\R}{\I}(J_j)$ is the marginal probability of $J_j$, that
  is, $\sum_I\tightcirc{\R}{\I}(I,J_j)$.
\end{itemize}
\end{definition}

Note that the summation in the unsupervised variant is over the sample
space of the intention model $\I$.  While the reader might be
concerned about the source of many examples $(I_j,J_j)$ and $J_j$ in
the phrasing of the learning problems, it is oftentimes the case that
a single large example $(I,J)$ (or just $J$ in the unsupervised
variant) can be decomposed into many smaller examples. This depends on
the independence assumptions in the parametric models $\I_{\Xi}$ and
$\R_{\Theta}$ as we discuss in Section~\ref{sec:learningsetup}. In the
next sections, we give preliminary results on the introduced problems,
focusing on parfactor/subset and parfactor/update PUDs.

%%% Preliminary Analysis
%%% Cleaning %%%%
\section{Cleaning and Querying Unclean Data}\label{sec:mlipqa}
In this section, we draw connections between data cleaning in the PUD
framework (MLIs) and traditional \e{minimum repairs}. We also give
preliminary results on the complexity of cleaning. Finally, we draw a
connection between probabilistic query answering and certain answers.

\subsection{Generalizing Minimum Repairs}

We now show that the concept of an MLI in parfactor/subset PUDs
generalizes the concept of a minimum subset repair, and the concept of
an MLI in parfactor/update PUDs generalizes the concept of an optimal
update repair. Minimum subset repairs correspond to MLIs of PUDs with
hard (or heavy) constraints. Minimum update repairs correspond to MLIs
over PUDs that assume both hard (or heavy) constraints and assumptions
of independence among the attributes. From the viewpoint of
computational complexity, this means that finding an exact MLI is not
easier than finding a minimum repair, which is often computationally
hard~\cite{DBLP:conf/pods/LivshitsKR18}. Therefore, we should aim for
\e{approximation} guarantees (which have clear semantics in the
probabilistic setting) if we wish to avoid restricting the generality
of the input.

\subparagraph*{Subset repairs and parfactor/subset PUDs} Recall that
in parfactor/subset PUDs (as defined in
Section~\ref{sec:pudexamples}), every intention $I$ with a nonzero
probability is a subset of the observed unclean database $J\str$. In
particular, every MLI is subset of $J\str$.  Our first result relates
cleaning in parfactor/subset PUDs to the traditional minimum subset
(or \e{cardinality}) repairs. This result states, intuitively, that
the notion of an MLI in a parfactor/subset PUD coincides with the
notion of a minimum subset repair if the weight of the formulas is
high enough and the probability of introducing error is small enough.

\def\thmmlisubset{ Let $(\D,\tau,J\str)$ be a parfactor/subset PUD
  with $\D=(\K,\Phi,w)$. For $i\in\tids(J\str)$, assume that
  $\K[i](\bot)>0$ and $\tau[i](J\str[i])>0$, let
  $q(i)=\K[i](J\str[i])/(\K[i](\bot)\cdot \tau[i](J\str[i]))$, and
  assume that $q(i)\geq 1$.  There is a number $M$ such that if
  $w(\varphi)>M$ for all $\varphi\in\Phi$ then the following are
  equivalent for all $I\subseteq J^\star$:
\begin{enumerate}
\item $I$ is an MLI.
\item $I$ is a minimum subset repair of $J\str$ w.r.t.~the weight
  function $w(i)=\log(q(i))$.
\end{enumerate}
}

\begin{theorem}\label{thm:mli-subset}
\thmmlisubset
\end{theorem}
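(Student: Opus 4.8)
The plan is to rewrite the cleaning objective $\tightcirc{\R}{\I}(I,J\str)$, for $I\subseteq J\str$, as an $I$-independent positive constant multiplied by $\prod_{i\in\tids(I)}q(i)\cdot\exp\bigl(-\sum_{\varphi\in\Phi}w(\varphi)\,|V(\varphi,I)|\bigr)$, and then to pass to logarithms, turning ``maximize the objective'' into ``minimize a weighted subset-repair cost plus a violation penalty''. First I would expand the definitions: since $I\subseteq J\str$ we have $I[i]=J\str[i]$ for every $i\in\tids(I)$, so, splitting $\tids(\K)\setminus\tids(I)$ into $\tids(J\str)\setminus\tids(I)$ and $\tids(\K)\setminus\tids(J\str)$,
\[
\tightcirc{\R}{\I}(I,J\str)=\frac{1}{Z}\,C_0\cdot\prod_{i\in\tids(I)}\K[i](J\str[i])\cdot\prod_{i\in\tids(J\str)\setminus\tids(I)}\K[i](\bot)\,\tau[i](J\str[i])\cdot\exp\Bigl(-\sum_{\varphi\in\Phi}w(\varphi)\,|V(\varphi,I)|\Bigr),
\]
where $C_0\eqdef\prod_{i\in\tids(\K)\setminus\tids(J\str)}\K[i](\bot)\cdot\prod_{i\in\tids(\D)\setminus\tids(J\str)}\tau[i](\bot)$ gathers exactly the factors whose index lies outside $\tids(J\str)$. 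By the standing requirement that $J\str$ has positive probability, some $\tightcirc{\R}{\I}(I_0,J\str)>0$; together with the hypotheses $\K[i](\bot)>0$, $\tau[i](J\str[i])>0$, and $\K[i](J\str[i])>0$ (the last because $q(i)\ge1$) for all $i\in\tids(J\str)$, this forces $C_0>0$ and hence $\tightcirc{\R}{\I}(I,J\str)>0$ for \emph{every} $I\subseteq J\str$. Dividing through by the $I$-independent positive constant $\prod_{i\in\tids(J\str)}\K[i](\bot)\,\tau[i](J\str[i])$, each $i\in\tids(J\str)$ now contributes the factor $q(i)$ when $i\in\tids(I)$ and the factor $1$ when $i\notin\tids(I)$, giving the claimed form $\tightcirc{\R}{\I}(I,J\str)=C\cdot\prod_{i\in\tids(I)}q(i)\cdot\exp\bigl(-\sum_{\varphi}w(\varphi)\,|V(\varphi,I)|\bigr)$ with $C>0$ independent of $I$.

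Then I would take logarithms and use $\sum_{i\in\tids(I)}\log q(i)=\sum_{i\in\tids(J\str)}\log q(i)-\sum_{i\in\tids(J\str)\setminus\tids(I)}\log q(i)$ to conclude that maximizing $\tightcirc{\R}{\I}(I,J\str)$ over $I\subseteq J\str$ is equivalent to minimizing
\[
g(I)\eqdef\sum_{i\in\tids(J\str)\setminus\tids(I)}\log q(i)\;+\;\sum_{\varphi\in\Phi}w(\varphi)\,|V(\varphi,I)|,
\]
so the MLIs are precisely the minimizers of $g$ over $I\subseteq J\str$. The key observation is that on a \emph{consistent} subset $I$ (one satisfying $\Phi$) the second sum vanishes, so $g(I)$ is exactly the cost of $I$ viewed as a subset repair of $J\str$ with respect to the weight function $w(i)=\log q(i)$, which is a legitimate (nonnegative) weight function precisely because $q(i)\ge1$.

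Finally I would fix the threshold $M\eqdef\sum_{i\in\tids(J\str)}\log q(i)\ (\ge0)$ (if $\Phi=\emptyset$ there is nothing to prove, every database being consistent). Since the empty database is consistent --- each constraint $\forall\bar x[\gamma]$ holds vacuously --- we have $\min_{I\text{ consistent}}g(I)\le g(\emptyset)=M$. If $w(\varphi)>M$ for all $\varphi\in\Phi$ and $I\subseteq J\str$ is inconsistent, then $|V(\varphi_0,I)|\ge1$ for some $\varphi_0$, whence $g(I)\ge\min_\varphi w(\varphi)>M\ge\min_{I'\text{ consistent}}g(I')$, so $I$ cannot be a minimizer of $g$. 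Hence the minimizers of $g$ over all $I\subseteq J\str$ coincide with its minimizers over the consistent subsets, which by the previous paragraph are exactly the minimum subset repairs of $J\str$ w.r.t.\ $w(i)=\log q(i)$; chaining this with the reduction above yields the equivalence of (1) and (2) for every $I\subseteq J\str$. I expect the only mildly delicate points to be the bookkeeping that isolates the constant $C$ --- verifying that every factor not indexed by a kept/dropped tuple or by a violation is genuinely $I$-independent and positive --- and checking that $M=\sum_i\log q(i)$ indeed suffices; the remainder is a direct computation.
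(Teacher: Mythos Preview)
Your proof is correct and follows essentially the same approach as the paper's: both factor $\tightcirc{\R}{\I}(I,J\str)$ into an $I$-independent constant times $\prod_{i\in\tids(I)}q(i)\cdot\exp\bigl(-\sum_\varphi w(\varphi)|V(\varphi,I)|\bigr)$, observe that the empty database is a consistent candidate with nonzero probability, and conclude that for large enough weights every MLI must be consistent and hence maximize $\sum_{i\in\tids(I)}\log q(i)$. Your version is in fact more explicit than the paper's---you give a concrete threshold $M=\sum_{i\in\tids(J\str)}\log q(i)$ and verify directly that any inconsistent $I$ has $g(I)>M\ge g(\emptyset)$, whereas the paper leaves $M$ implicit and argues informally via proportionality ($\sim$) that ``if $w_\varphi$ is large enough \dots\ we can assume'' the MLI is consistent.
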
 

\noindent Note that in the theorem, $q(i)$ is the ratio between
$\K[i](J\str[i])$, namely the probability that $\K$ produces the $i$th
tuple of $J\str$, and $\K[i](\bot)\cdot \tau[i](J\str[i])$, namely the
probability that $\K$ does not generate the $i$th tuple of $J\str$ but
$\tau$ does.

Next, we draw a similar connection between minimum update repairs and
MLIs of parfactor/update PUDs.

\subparagraph*{Update repairs and parfactor/update PUDs} We now turn
our attention to update repairs. Recall that in a parfactor/update PUD
(defined in Section~\ref{sec:pudexamples}), the intended clean
database $I$ is assumed to be an update of the observed unclean
database $J\str$. We establish a result analogous to
Theorem~\ref{thm:mli-subset}, stating conditions under which MLIs for
parfactor/update PUDs coincide with traditional minimum update
repairs.

Let $\scs$ be a schema, and let $\U=(\D,\kappa,J\str)$ be a
parfactor/update PUD with $\D=(\K,\Phi,w)$. We say that $\U$ is
\e{attribute independent} if $\K$ and $\kappa$ feature probabilistic
independence among the attributes. More precisely, if $i\in R^{\D}$
for $R\in\scs$ with $\sig(R)=(A_1,\dots,A_k)$, then we assume that
$\K[i](a_1,\dots,a_k)$ can be written as
$\K[i](a_1,\dots,a_k)=\prod_{j=1}^k \K_{A_j}[i](a_j)$ and, for
$t=(b_1,\dots,b_k)$, that $\kappa[i,t](a_1,\dots,a_k)$ can be written
as $\kappa[i,t](a_1,\dots,a_k)=\prod_{j=1}^k
\kappa_{A_j}[i,b_j](a_j)$.  In particular, the choice of the value
$a_j$ depends only on $b_j$ and not on other values $b_{j'}$.  

The following theorem states that the concept of an MLI of a
parfactor/update PUD coincides with the concept of a minimum update
repair when the PUD is attribute independent and, moreover, the weight
of the integrity constraints is high.

\def\thmmliupdateold{ ~Let $(\D,\vg,p,J\str)$ be an MLD/update PUD
  where $\D$ is the MLD $(\P,\Phi,w)$ and $\P$ is the TIID
  $(\varrho,p',\tg')$.  Let $\tg'$ be symmetric and able to freely
  update $J\str$.  There is a number $M$ such that if
  $w_{\varphi}>M$ for all $\varphi\in\Phi$ then the following are
  equivalent for all updates $I$ of $J^\star$: (1) $I$ is an MLI, and
  (2) $I$ is a minimum update repair of $J\str$ w.r.t.~$\Phi$ under
  the value weight $w^{R.A}_{J\str}(i)=-\log(r_{R.A}(i))$.  }

\def\thmmliupdate{ Let $\U=(\D,\tau,J\str)$ be an
  attribute-independent parfactor/update PUD with $\D=(\K,\Phi,w)$
  such that $\U\str(I)>0$ for at least one consistent update $I$ of
  $J\str$.  There is a number $M$ such that if $w(\varphi)>M$ for all
  $\varphi\in\Phi$, then the following statements are equivalent for
  all $I\subseteq J^\star$:
\begin{enumerate}
\item $I$ is an MLI.
\item $I$ is a minimum update repair w.r.t.~the weight function
  \[w(i.A,a)=-\log(\K_A[i](a)\cdot\kappa_A[i,a](J\str[i].A))\,.\]
\end{enumerate}
}
\begin{theorem}\label{thm:mli-update}
\thmmliupdate
\end{theorem}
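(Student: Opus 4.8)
The plan is to follow the template of the proof of Theorem~\ref{thm:mli-subset}: first expand $\tightcirc{\R}{\I}(I,J\str)$, for an update $I$ of $J\str$, into a product of one factor per database cell times one penalty factor per constraint violation; then pass to logarithms so that computing an MLI becomes a minimum-cost problem over the updates of $J\str$; and finally choose $M$ large enough that no inconsistent update can have smaller cost than the cheapest consistent one. (We read the last clause of the statement as ranging over updates $I$ of $J\str$, since $\tightcirc{\R}{\I}(I,J\str)=0$ otherwise.)

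For the expansion, fix an update $I$ of $J\str$, so $\tids(I)=\tids(J\str)$ and $\cells(I)=\cells(J\str)$. Unfolding the parfactor/update definition together with $\D(I)=\tfrac1Z\K(I)\exp(-\sum_\varphi w(\varphi)|V(\varphi,I)|)$ and the definition of $\K(I)$, and noting that the partition function $Z$ and the factors $\K[i](\bot)$ for $i\in\tids(\K)\setminus\tids(J\str)$ are the same for every update, we get $\tightcirc{\R}{\I}(I,J\str)=C\cdot\big(\prod_{i\in\tids(I)}\K[i](I[i])\cdot\kappa[i,I[i]](J\str[i])\big)\cdot\exp(-\sum_{\varphi\in\Phi}w(\varphi)|V(\varphi,I)|)$ for a constant $C$ independent of $I$. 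Now invoke attribute independence, writing $\K[i](I[i])=\prod_j\K_{A_j}[i](I[i].A_j)$ and $\kappa[i,I[i]](J\str[i])=\prod_j\kappa_{A_j}[i,I[i].A_j](J\str[i].A_j)$, and regroup the double product over identifiers and attributes into a single product over cells $i.A\in\cells(I)$. Taking $-\log$ then shows that maximizing $\tightcirc{\R}{\I}(\cdot,J\str)$ over updates of $J\str$ is the same as minimizing
\[\mathrm{cost}(I)=\sum_{i.A\in\cells(I)}w(i.A,I[i].A)+\sum_{\varphi\in\Phi}w(\varphi)\,|V(\varphi,I)|\,,\]
where $w(i.A,a)=-\log(\K_A[i](a)\cdot\kappa_A[i,a](J\str[i].A))$ is exactly the weight function in the statement. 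Since its argument is a product of (sub)probabilities, $w(i.A,a)\ge 0$, so this is a legitimate cell-weight function; a value of probability $0$ gets weight $+\infty$, consistently with the fact that it makes $\U\str$ vanish, so we may restrict to updates of finite cost.

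For the threshold, by hypothesis there is a consistent update $I^\circ$ of $J\str$ with $\U\str(I^\circ)>0$; put $M=\sum_{i.A\in\cells(I^\circ)}w(i.A,I^\circ[i].A)$, which is finite and depends on the PUD only. On a consistent update the penalty term of $\mathrm{cost}$ is $0$, so there $\mathrm{cost}(I)=\sum_{i.A\in\cells(I)}w(i.A,I[i].A)$, precisely the objective defining a minimum update repair of $J\str$ w.r.t.~$w(i.A,\cdot)$. If $w(\varphi)>M$ for all $\varphi\in\Phi$, then every inconsistent update $I$ satisfies $\mathrm{cost}(I)\ge\sum_\varphi w(\varphi)|V(\varphi,I)|\ge\min_{\varphi\in\Phi}w(\varphi)>M=\mathrm{cost}(I^\circ)$ (using that $\Phi$ is finite), so $I$ is strictly worse than $I^\circ$ and is not an MLI. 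Hence every MLI is a consistent update; and since on consistent updates $\tightcirc{\R}{\I}(\cdot,J\str)$ is a strictly decreasing function of the cell-weight sum while every inconsistent update is worse than $I^\circ$, the MLIs of $\U$ are exactly the consistent updates of minimum cell-weight sum, i.e., the minimum update repairs of $J\str$ w.r.t.~$w(i.A,\cdot)$. This gives the asserted equivalence in both directions.

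The delicate point I expect to be the main obstacle is the choice of $M$: one must guarantee both that a \emph{finite} reference cost exists --- which is exactly what the hypothesis ``$\U\str(I)>0$ for some consistent update $I$'' secures --- and that $M$ depends on the PUD alone; this forces $M$ to be read off a consistent update, where the cost does not involve the constraint weights $w(\varphi)$, rather than from an arbitrary update. The remaining steps --- carrying the normalization constant and the identifiers in $\tids(\K)\setminus\tids(J\str)$ through the computation, and regrouping the attribute-independent product into a per-cell product --- are routine bookkeeping needed only to make the passage to logarithms literally correct.
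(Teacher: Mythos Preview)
Your proof is correct and follows essentially the same route as the paper's: expand $\tightcirc{\R}{\I}(I,J\str)$ using the parfactor/update definition, apply attribute independence to factorize over cells, and argue that for sufficiently large constraint weights the MLI must be consistent, whence the problem reduces to minimizing the stated cell-weight sum. The paper's proof is terser and somewhat hand-wavy at the threshold step (it simply asserts that ``if $w_\varphi$ is large enough\dots\ every $I$ that satisfies $\Phi$ has a higher probability than every $I$ that violates $\Phi$'' without exhibiting $M$), whereas you explicitly construct $M$ from a fixed consistent $I^\circ$ and use nonnegativity of $w(i.A,a)$ to bound the cost of any inconsistent update; this makes your argument tighter and also clarifies exactly where the hypothesis ``$\U\str(I)>0$ for some consistent update'' is used.
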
 
Note that $\K_A[i](a)\cdot\kappa_A[i,a](J\str[i].A)$ is the
probability that $a$ is produced by $\K$ for the cell $i.A$, and that
$a$ is then changed to $J\str[i].A$ via $\kappa$. Also note that in
the case where this product is zero, we slightly abuse the notation by
assuming that the weight is infinity.

\subsection{Complexity of Cleaning with Key Constraints}
We now present a complexity result on computing an MLI of a
parfactor/subset PUD in the presence of key constraints. The following
theorem states that in the case of a single key constraint per
relation (which is the common setup, e.g., for the analysis of certain
query
answering~\cite{DBLP:journals/tods/KoutrisW17,DBLP:conf/icde/AndritsosFM06,DBLP:journals/ipl/KolaitisP12}),
an MLI can be found in polynomial time. Note that we do not make any
assumption about the parameters; in particular, it may be the case
that an MLI violates the key constraints since the constraints are
weak. Regarding the representation of the probability spaces $\K$ and
$\tau$, the only assumption we make is that, given a tuple $t$, the
probabilities $\K[i](t)$ and $\tau[i](t)$ can be computed in
polynomial time.

\def\thmmlikey{ Let $(\D,\tau,J\str)$ be a parfactor/subset PUD with
  $\D=(\K,\Phi,w)$. If $\Phi$ consists of (at most) one key constraint
  per relation, and no relation of $J\str$ has duplicate tuples, then
  an MLI can be computed in polynomial time.  }

\begin{theorem}\label{thm:mli-key}
\thmmlikey
\end{theorem}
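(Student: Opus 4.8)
The plan is to reduce the computation of an MLI to a weighted bipartite/matching-style optimization that can be solved in polynomial time once the problem decomposes per relation and per key-group. First I would observe that, since $\Phi$ consists of at most one key constraint per relation and $J\str$ has no duplicate tuples, the objective $\tightcirc{\R}{\I}(I,J\str)$ (over subsets $I\subseteq J\str$, the only candidates with nonzero probability) factors almost entirely across the tuples of $J\str$. Concretely, $\tightcirc{\R}{\I}(I,J\str) = \frac{1}{Z}\cdot\prod_{i\in\tids(I)}\K[i](J\str[i]) \cdot \prod_{i\in\tids(J\str)\setminus\tids(I)}\K[i](\bot)\cdot\tau[i](J\str[i]) \cdot \prod_{i\in\tids(\D)\setminus\tids(J\str)}\tau[i](\bot)\cdot \exp(-\sum_{\varphi\in\Phi}w(\varphi)|V(\varphi,I)|)$. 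Taking logarithms and discarding the additive constants that do not depend on $I$ (the $Z$ term and the $\tau[i](\bot)$ term over $\tids(\D)\setminus\tids(J\str)$, which is fixed), maximizing the probability becomes: choose a subset $S=\tids(I)\subseteq\tids(J\str)$ maximizing $\sum_{i\in S}\log q(i) - \sum_{\varphi}w(\varphi)\,|V(\varphi,I)|$, where $q(i)$ is the ratio defined as in Theorem~\ref{thm:mli-subset} (with the convention that if $q(i)<1$ the corresponding tuple is "cheap" to drop, and tuples with $\K[i](\bot)\cdot\tau[i](J\str[i])=0$ are forced in, while tuples with $\K[i](J\str[i])=0$ are forced out).

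Next I would use the single-key structure to decompose the optimization. A key constraint on relation $R$ partitions $\tids(R^{J\str})$ into \emph{key-groups}: sets of tuples sharing the same values on the key attributes. Because $J\str$ has no duplicate tuples, tuples within a key-group differ on some non-key attribute, so any two distinct tuples in the same group form exactly one violation of the key, and tuples from different groups never violate the key. Hence $|V(\varphi,I)|$ for the key constraint $\varphi$ of $R$ is a sum, over key-groups $g$, of a term depending only on how many tuples of $g$ are retained in $I$ — and if $I$ retains $n_g$ tuples from group $g$, the number of violations contributed is essentially $n_g(n_g-1)$ (ordered pairs, matching the TRC formulation) or $\binom{n_g}{2}$ depending on the exact counting convention, in any case a convex increasing function of $n_g$. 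Therefore the global objective splits into independent subproblems, one per key-group $g$ of each relation: maximize $\sum_{i\in S_g}\log q(i) - w(\varphi)\cdot f(|S_g|)$ over $S_g\subseteq g$, where $f$ is the (fixed, easily-computed) violation-count function. For a fixed cardinality $m=|S_g|$ the first sum is maximized by picking the $m$ tuples of largest $\log q(i)$; so one only needs to iterate $m$ from $0$ to $|g|$, and for each $m$ sort the group's log-ratios and take the top $m$ — an $O(|g|\log|g|)$ computation per group, hence polynomial overall. Assembling the optimal $S_g$ for every group of every relation yields the MLI.

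The main obstacle — and the step I would be most careful about — is handling the edge cases in the log-ratios cleanly: tuples $i$ with $\K[i](J\str[i])=0$ (which must be excluded, since including them zeroes the probability), tuples with $\K[i](\bot)=0$ or $\tau[i](J\str[i])=0$ (which must be included, or else that candidate has probability zero), and ensuring that after these forced choices at least one consistent-or-not-but-nonzero-probability subset still exists so that $\U\str$ is well defined. One must also verify that the decomposition genuinely holds — i.e. that there is no cross-group interaction through $\Phi$ — which is exactly where the \emph{single key constraint per relation} hypothesis is essential: with two keys or with FDs, tuples from different "groups" (under one dependency) could still interact through another, and the per-group subproblems would no longer be independent, so a greedy-by-cardinality argument would fail. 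A minor additional subtlety is that the violation count for a key, under the $\forall x,y[\dots]$ TRC semantics, counts ordered pairs including possibly $x=y$; I would fix the precise form of $f$ at the outset and note that its exact shape is immaterial as long as it is a fixed polynomial-time-computable function of $|S_g|$, since we brute-force over all cardinalities anyway.
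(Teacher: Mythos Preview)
Your proposal is correct and follows essentially the same route as the paper: decompose the objective across key-groups (what the paper calls \emph{blocks}), observe that within a block the violation count depends only on the number of retained tuples, and then for each block brute-force over all cardinalities, picking for each cardinality the tuples with largest log-ratio. Your treatment is in fact a bit more careful than the paper's own proof in spelling out the edge cases (forced inclusions/exclusions when some $\K[i](\cdot)$ or $\tau[i](\cdot)$ vanishes) and in justifying why the single-key hypothesis is exactly what makes the per-group decomposition sound; the paper simply asserts the factorization ``due to probabilistic independence between blocks'' and moves on.
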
 

It is left for future investigation to seek additional constraints
(e.g., functional dependencies) for which an MLI can be found in
polynomial time.  Note that Theorem~\ref{thm:mli-subset} implies that
(under conventional complexity assumptions) we cannot generalize the
polynomial-time result to all sets of functional dependencies, since
finding a minimum subset repair might be computationally
hard~\cite{DBLP:conf/icdt/LopatenkoB07,DBLP:conf/pods/LivshitsKR18}.

%%% Query Answering %%%%
\subsection{Probabilistic Query Answering}
\label{sec:pqa}
For probabilisitic query answering, we again focus on the
parametric/subset PUDs, and now we draw a connection to \e{consistent
  query answering} over the cardinality repairs.  Recall that a
\e{consistent answer} for a query $Q$ over an inconsistent database
$J$ is a tuple $t$ that belongs to $Q(I)$ for every cardinality repair
$I$ of $J$.

Let $\scs$ be a schema, $J\str$ a database, and $\Phi$ a set of
integrity constraints. Let $M=|\tids(J\str)|$.  The \e{uniform}
parfactor/subset PUD for $J\str$ and $\Phi$ with the parameters $p$
and $u$, denoted $\U_{p,u}(J\str,\Phi)$ or just $\U_{p,u}$ if $J\str$
and $\Phi$ are clear from the context, is the parfactor/subset PUD
$(\D,\tau,J\str)$ with $\D=(\K,\Phi,w)$ such that the following hold.
\begin{itemize}
\item For all $R\in\scs$ we have $\tids(R^{\K})=\tids(R^{J\str})$ and
  $\K[i](x)=1/M$ for every tuple identifier $i$ and argument $x$ in
  $R^{\K}\cup\set{\bot}$. The remaining mass (required to reaching
  $1$) is given to an arbitrary tuple outside of $J\str$.
\item $w(\varphi)=u$ for every $\varphi\in\Phi$.
\item $\tau[i](\bot)=p$, and $\tau[i](t)=(1-p)/M$ for every identifier
  $i$ and tuple $t$ in $R^{\K}$. Again, the remaining mass is given to
  an arbitrary tuple outside of $J\str$.
\end{itemize}
Observe that $\D$ is defined in such a way that every subset $I$ of
$J\str$ has the same prior probability $\K(I)$, namely
$1/M^{|J\str|}$.

The following theorem states that, for $\U_{p,u}$, the consistent
answers are precisely the answers whose probability approaches one
when all of the following hold:
\e{(1)}
the probability of introducing error (i.e., $1-p$) approaches zero;
\e{and (2)} the weight of the weak constraints (i.e., $u$) approaches
infinity, that is, the constraints strengthen towards hardness.

\def\thmpqasubset{ Let $J\str$ be a database, $\Phi$ a set of
  integrity constraints, $Q$ a query, and $t$ a tuple.  The following
  are equivalent: 
\begin{enumerate}
\item $t$ is a consistent answer over the cardinality repairs.
\item $\lim_{p\limit 1}\lim_{w\limit\infty}\prob_{I\sim\U_{p,u}\str}(t\in
  Q(I))=1$.  
\end{enumerate}
}

\begin{theorem}\label{thm:pqa-subset}
\thmpqasubset
\end{theorem}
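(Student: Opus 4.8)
The plan is to analyze the limiting behavior of the posterior distribution $\U_{p,u}^\star$ over subsets $I \subseteq J^\star$ as $u \to \infty$ first and then $p \to 1$, and show that in this double limit the posterior collapses onto the uniform distribution over the cardinality repairs of $J^\star$ w.r.t.\ $\Phi$. Once that is established, the equivalence of the two statements follows because $\prob_{I \sim \U_{p,u}^\star}(t \in Q(I)) \to 1$ iff $t \in Q(I)$ for every cardinality repair $I$, which is exactly the definition of a consistent answer over the cardinality repairs.

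\textbf{Step 1: Write out $\U_{p,u}^\star(I)$ explicitly.} Using the definition of a parfactor/subset PUD together with the specific choices in $\U_{p,u}$, for $I \subseteq J^\star$ we have $\tightcirc{\R}{\I}(I,J^\star) = \K(I) \cdot \exp(-u \cdot V(I)) \cdot \prod_{i \in \tids(J^\star)\setminus\tids(I)} \tau[i](J^\star[i]) \cdot \prod_{i\in\tids(\D)\setminus\tids(J^\star)}\tau[i](\bot)$, where $V(I) = \sum_{\varphi\in\Phi}|V(\varphi,I)|$. Here $\tids(\D) = \tids(J^\star)$, so the last product is empty. By the observation following the definition, $\K(I) = 1/M^{|J^\star|}$ is constant in $I$, and $\tau[i](J^\star[i]) = (1-p)/M$ for each deleted identifier. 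Hence, up to a normalization constant independent of $I$,
\[
\U_{p,u}^\star(I) \;\propto\; \left(\tfrac{1-p}{M}\right)^{|\tids(J^\star)| - |I|} \cdot \exp\!\big(-u\cdot V(I)\big)\,.
\]
Writing $d(I) = |\tids(J^\star)| - |I|$ for the number of deleted tuples, the unnormalized weight of $I$ is $\big((1-p)/M\big)^{d(I)} \exp(-u\,V(I))$.

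\textbf{Step 2: Take $u \to \infty$.} For fixed $p \in (0,1)$, as $u \to \infty$ the factor $\exp(-u\,V(I))$ forces all mass onto the \emph{consistent} subsets of $J^\star$, i.e.\ those with $V(I) = 0$ (note a consistent subset always exists, e.g.\ the empty database, since the constraints are of the universal form and the empty instance satisfies them — or more carefully, the cardinality repairs themselves witness this). Among consistent subsets the exponential factor is exactly $1$, so in the limit $u\to\infty$ the posterior becomes, up to normalization, $\U_{p,\infty}^\star(I) \propto \big((1-p)/M\big)^{d(I)}$ supported on $\{I \subseteq J^\star : I \models \Phi\}$.

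\textbf{Step 3: Take $p \to 1$.} Now $(1-p)/M \to 0$, so $\big((1-p)/M\big)^{d(I)}$ decays geometrically in $d(I)$; the dominant terms are those with the \emph{smallest} $d(I)$ among consistent subsets, i.e.\ the largest consistent subsets, which are precisely the cardinality repairs (since the weight function in $\U_{p,u}$ is constant, minimum subset repair $=$ cardinality repair). A short argument comparing the total weight of the cardinality repairs against the total weight of all strictly smaller consistent subsets — the latter is $O\big((1-p)/M\big)$ relative to the former — shows $\U_{p,\infty}^\star(I) \to 1/(\#\text{cardinality repairs})$ for each cardinality repair $I$ and $\to 0$ otherwise. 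Therefore $\prob_{I\sim\U_{p,u}^\star}(t\in Q(I)) \to \frac{|\{I : I \text{ card.\ repair},\ t \in Q(I)\}|}{\#\text{cardinality repairs}}$, which equals $1$ iff $t\in Q(I)$ for all cardinality repairs $I$, i.e.\ iff $t$ is a consistent answer over the cardinality repairs. This proves (1)$\Leftrightarrow$(2).

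\textbf{Main obstacle.} The subtle point is justifying the \emph{iterated} limit rigorously: I must be careful that the inner limit $u\to\infty$ is taken with $p$ held fixed and yields a well-defined probability distribution (which it does, since the set of consistent subsets is nonempty and the sums are over a countable but effectively finite-support family — only subsets of $J^\star$ have nonzero weight), and only then let $p\to 1$. A cleaner alternative that sidesteps any interchange-of-limits worry is to keep both parameters and show directly that for every $\epsilon$ there exist thresholds $u_0, p_0$ such that the posterior mass outside the cardinality repairs is below $\epsilon$ whenever $u > u_0$ and $p > p_0$ — this is just a matter of bounding two finite geometric-type sums (one controlled by $e^{-u}$, one by $(1-p)$) and is where the only real calculation lives. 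I also need the trivial but necessary fact that a cardinality repair always exists (so the normalization is nonzero and statement (2) is about a genuine limit of probabilities), which holds because $J^\star$ has finitely many subsets and at least one — e.g.\ any $\subseteq$-maximal consistent one — is a cardinality repair.
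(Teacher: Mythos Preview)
Your proposal is correct and follows essentially the same approach as the paper: write the unnormalized posterior over subsets $I\subseteq J^\star$ as proportional to $C^{|J^\star\setminus I|}\exp(-u\,V(I))$ with $C=(1-p)/M$, let $u\to\infty$ to concentrate mass on consistent subsets, then let $p\to 1$ (so $C\to 0$) to concentrate on the cardinality repairs, and read off the equivalence. Your write-up is in fact more explicit than the paper's about the iterated-limit issue and the existence of a consistent subset, but the underlying argument is identical.
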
 

\noindent Therefore, Theorem~\ref{thm:pqa-subset} sheds light on the
role that the consistent answers have in probabilisitic query
answering over parfactor/subset PUDs.

\eat{\begin{proof}
  Recall the definition of the (unnormalized) probability
  $\R_{\I}(I,J\str)$ of $I$.
\begin{align*}
&\D(I)
\times
\prod_{R\in\scs}
\Big(\hspace{-0.5em}\prod_{\substack{i\in\\\tids_R(J\str)\setminus\\\tids_R(I)}}\hspace{-0.5em}
p_R\cdot\tg(J\str[i])\Big)
\times 
\Big(\hspace{-0.5em}\prod_{\substack{i\in\varrho_R\setminus\\\tids_R(J\str)}}\hspace{-0.5em}
(1-p_R)\Big)
\\&
=\D(I)
\times
\left(\prod_{R\in\scs}\prod_{\substack{i\in\varrho_R\setminus\\\tids_R(I)}}(1-p_R)\right)
\times
\Big(\prod_{R\in\scs}\hspace{-0.5em}\prod_{\substack{i\in\\\tids_R(J\str)\setminus\\\tids_R(I)}}\hspace{-1em}
\frac{p_R\cdot\tg(J\str[i])}{(1-p_R)}\Big)
\sim\D(I)
\times\hspace{-1em}
\prod_{\substack{i\in\\\tids_R(J\str)\setminus\\\tids_R(I)}}\hspace{-1em}
\frac{p_R\cdot\tg(J\str[i])}{(1-p_R)}
\end{align*}
Due to symmetry, we can write $\R_{\I}(I,J\str)$ as
follows for some $C\in(0,1)$. We have that 
$\R_{\I}(I,J\str)\sim \D(I)\cdot C^{|J\str\setminus I|}$.
For any constant $p$, as $w\limit\infty$ the mass of $\D$
concentrates on the databases that satisfy $\Phi$, or in other words,
as $w\limit\infty$ the probability of satisfying $\Phi$ approaches
$1$.  Hence, for fixed any fixed $p$, the probability
$\R_{\I}(I,J\str)$ concentrates around the consistent subsets. And,
since $p'=0.5$, all consistent subsets are equally likely.

Now, as $p$ approaches zero, so does $C$. Hence, if $I$ is a
cardinality repair and $I'$ is a consistent subset that is not a
cardinality repair, the ratio between $R_{\I}(I,J\str)$ and
$\R_{\I}(I',J\str)$ approaches infinity as $p$ approaches
zero. 

We conclude that the total probability of the cardinality repairs
approaches one as $p$ approaches zero, and all cardinality repairs
have the same probability.  In particular, if $\tup a$ is a consistent
answer, then its probability approaches one.  Conversely, if $\tup a$
is \e{not} a consistent answer, then there is a constant portion of the
probability that is missing for every $p$---this is the probability of
a cardinality repair $I$ in which $\tup a\notin Q(I)$.
\end{proof}}

%%% Learning %%%%
\section{Learning Probabilistic Unclean Databases}
\label{sec:learning}

We now give preliminary results on PUD learning, focusing on
parfactor/update PUDs.  We begin by describing the setup we consider
in this section and the representation system $\pudl$ we use to
describe the parametric intention and realization models $\I_{\Xi}$
and $\R_{\Theta}$.

\subsection{Setup}
\label{sec:learningsetup}
To discuss the learning of parameters, we need to specify the actual
parametric model we assume.  Let $\scs$ be a schema, and let
$\U=(\D,\kappa,J\str)$ be a parfactor/update PUD with
$\D=(\K,\Phi,w)$.  Since we restrict the discussion to
parfactor/update PUDs, we assume (without loss of generality) that the
identifiers in $\D$ are exactly those in $J\str$, that is,
$\tids(\D)=\tids(J\str)$.

In our setup, $\K$ of $\D$ and $\kappa$ of $\U$ are expressed in a
parametric form that allows us to define the parametric intention and
realization models $\I_{\Xi}$ and $\R_{\Theta}$ as \e{Gibbs}
distributions.  We refer to these as {\em Gibbs parfactor/update PUD
  models}, and define them as follows.

\subparagraph*{Parametric intention} To specify $\K$, we assume that
for each relation symbol $R \in \scs$, the probability $\K[i](t)$,
with $i \in \tids(R^{\D})$, is expressed in the form of an exponential
distribution $\K[i](t) = \exp\left(\sum_{f \in F}w_f f(t)\right)$
where each $f\in F$ is an arbitrary function (\e{feature}) over $t$,
and each weight $w_f$ is a real number.  

For example, a feature $f\in F$ may be a function that takes as input
a tuple and returns a value in $\{-1,1\}$. An example feature $f$ can
state that $f(t)=1$ if $t[\textsf{gender}]=\texttt{female}$ and,
otherwise, $f(t)=-1$.  Another example is $f[t]=1$ if
$t[\textsf{zip}]$ starts with \texttt{53} and
$t[\textsf{state}]=\texttt{WI}$, and otherwise $f[t]=-1$.  Additional
examples of such features include the ones used in our prior work on
HoloClean~\cite{holoclean} to capture the co-occurrence probability of
attribute value pairs. Each weight $w_f$ corresponds to a parameter of
the model. An assignment to these weights gives as a probability
distribution, similarly to probabilistic graphical
models~\cite{Koller:2009:PGM:1795555}.

The parameter vector $\Xi$ of the parametric intention model
$\I_{\Xi}$ consists of two sets of parameters: the weights $w_f$ for
each feature $f \in F$, and the weights $w(\varphi)$, which we write
as $w_\varphi$ for uniformity of presentation, for each constraint
$\varphi \in \Phi$.  Thus, the overall parametric intention model
$\I_{\Xi}$ is expressed as a parametric Gibbs distribution.

We will take a special interest in the case where the integrity
constraints are \e{unary}, which means that they have the form
$\forall{x}[\gamma(x)]$, where $\gamma$ is quantifier free; hence, a
unary constraint is a statement about a single tuple.  Examples of
unary constraints are restricted cases of conditional functional
dependencies~\cite{DBLP:conf/icde/BohannonFGJK07,
  Fan:2008:CFD:1366102.1366103}. An example of such a constraint can
be ``age smaller than 10 cannot co-occur with a salary greater than
\$100k.''

\subparagraph*{Parametric realization} We consider a parametric realization
model that is similar to the parametric intention model presented
above, which is again a parametric Gibbs distribution. For each
relation symbol $R \in \scs$ and every pair $(t, t^\prime)\in
\tuples(R)\times\tuples(R)$, the probability $\kappa[i,t](t^\prime)$,
with $i \in \tids(\D)$, is expressed in the form of the Gibbs
distribution $ \kappa[i,t](t^\prime) =
\frac{1}{Z_{\kappa}(t)}\exp\left(\sum_{g \in G}w_g
  g(t,t^\prime)\right)$, where $G$ is the set of features, and each $g$ is
an arbitrary function (\e{feature}) over $(t,t^\prime)$, each weight
$w_g$ is a real number,
and $Z_{\kappa}(t)$ is a normalization constant defined as $
Z_{\kappa}(t) = \sum_{t^\prime \in \tuples(R)}\exp\left(\sum_{g \in
    G}w_g g(t,t^\prime)\right) $.  Hence, we get a parametric model
for the probability distribution $\kappa$.

As an example, a feature function $g(t,t^\prime)$ may capture spelling
errors: $g(t,t^\prime)=1$ if $t^\prime[\mathsf{city}]$ can be obtained
by deleting one character from $t[\mathsf{city}]$, and otherwise,
$g(t,t^\prime)=-1$. The parameter vector $\Theta$ of the parametric
realization model $\R_{\Theta}$ consists of the set of weights $w_g$
for each feature $g \in G$.

\subparagraph*{Assumptions} We make two assumption here. First, we assume
that the attributes of all relation symbols in $\scs$ take values over
a finite and given set. This means that for each relation symbol $R
\in \scs$, $\tuples(R)$ is also finite and given as input. Second, all
features describing $\K$ and $\kappa$ can be computed efficiently,
that is, in polynomial time in the size of the input.

\subparagraph*{Obtaining examples for learning} For supervised
learning, we require a training collection $(I_j,J_j)_{j=1}^n$ and for
unsupervised learning, a training collection $(J_j)_{j=1}^n$.  We
would like to make the case that, oftentimes, a single large example
can be broken down into many small examples.  Recall that in our setup
(Gibbs parfactor/update PUDs), cross-tuple correlations can be
introduced only by the integrity constraints in $\Phi$.  Consider, for
instance, the case where all constraints in $\Phi$ are unary.  Then,
we get \e{tuple-independent} parfactor/update PUDs, and each tuple
identifier $i$ can become an example database: $(I[i],J[i])$ in the
supervised case, and $J[i]$ in the unsupervised case.  For general
constraints, cross-tuple correlations exist, and each example
$(I_j,J_j)$ and $(J_j)$ can be obtained by taking correlated groups of
tuples from $J\str$ by considering different values for the attributes
participating in each constraint $\varphi \in \Phi$. The number of
tuples contained in each example depends on the constraints. This is a
standard practice with parameterized probabilistic models as the ones
we consider here~\cite{pmlr-v28-london13}.

\subsection{Supervised Learning}
\label{sec:superlearn}

We begin by considering supervised $(\scs,\pudl)$-learning. All
results presented in this section build upon standard tools from
statistical learning.  We are given a collection $(I_j,J_j)_{j=1}^n$
of intention-realization examples, and the goal is to find parameter
values $\mathbf{c}$ and $\mathbf{d}$ that maximize the likelihood of
pairs $(I_j,J_j)_{j=1}^n$, that is,
$\prod_{j=1}^n\tightcirc{\R}{\I}(I_j,J_j)$ for
$\I=\I_{\Xi/\mathbf{c}}$ and $\R=\R_{\Theta/\mathbf{d}}$.  To
facilitate the analysis, we write this objective function as a sum
over terms by considering the negative log-likelihood $l(\Xi =
\mathbf{c},\Theta = \mathbf{d}; (I_j,J_j)_{j=1}^n) =
-\log\prod_{j=1}^n\tightcirc{\R}{\I}(I_j,J_j) =
-\sum_{j=1}^n\log\tightcirc{\R}{\I}(I_j,J_j)$, and seek parameter
values $\mathbf{c}$ and $\mathbf{d}$ that minimize it.  For
parfactor/update PUDs we have that $\tightcirc{\R}{\I}(I,J)=
\D(I)\times \prod_{\substack{i\in\tids(I)}} \kappa[i,I[i]](J[i])$
where $\D(I) = \K(I)\,\,\times\,\,
1/Z\,\,\times\,\,
\exp( -\sum_{\varphi\in\Phi} w(\varphi)\times |V(\varphi,I)|)$. For
the Gibbs parfactor/update PUD, $\mathbf{c}$ corresponds to an
assignment of parameters $w_{f}$ describing $\K$ and parameters
$w_\varphi = w(\varphi)$ for the constraints
$\varphi\in\Phi$. Similarly, $\mathbf{d}$ corresponds to an assignment
of parameters $w_{g}$ describing $\kappa$. We use $Z(\mathbf{c})$ to
denote the partition function $Z$ under the parameters
$\mathbf{c}$. We have:
\begin{align}
\notag
l(\mathbf{c},\mathbf{d}; (I_j,J_j)_{j=1}^n) = &-\sum_{j=1}^n\log \left(\K(I_j;\mathbf{c})\,\,\times\,\,
\exp\left(
-\sum_{\varphi\in\Phi}
w_\varphi\times |V(\varphi,I_j)|\right)\right) + n\log Z(\mathbf{c})\\
& - \sum_{j=1}^n\sum_{\substack{i\in\tids(I_j)}} \log\left(\kappa[i,I[i];\mathbf{d}](J_j[i])\right)
\label{eq:l}
\end{align}
where $\K(I_j;\mathbf{c})$ denotes that $\K$ is parametrized by
$\mathbf{c}$ and $\kappa[i,I[i];\mathbf{d}]$ denotes that $\kappa$ is
parametrized by $\mathbf{d}$.

Our goal becomes to minimize the expression in~\eqref{eq:l}. It is
well-known from the ML literature that there is no
analytical solution to such minimization problems, and one needs to
use iterative \e{gradient-based}
methods~\cite{Koller:2009:PGM:1795555}. We investigate whether
gradient-based methods can indeed find a global minimum, and whether
computing the gradient of this objective during each iteration is
tractable. For the first question, the answer is positive.

\def\propconvex{
  $l(\Xi = \mathbf{c},\Theta = \mathbf{d}; (I_j,J_j)_{j=1}^n)$ is a
  convex function of  $(\Xi,\Theta)$.  }

\begin{proposition}\label{prop:convex_likelihood}
  \propconvex
\end{proposition}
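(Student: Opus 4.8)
The plan is to show that $l(\mathbf{c},\mathbf{d};(I_j,J_j)_{j=1}^n)$, as written in~\eqref{eq:l}, decomposes into a sum of three groups of terms, each of which is a convex function of the relevant parameters, and then invoke the fact that a sum of convex functions (over a common domain, here all of $\mathbb{R}^{|\Xi|+|\Theta|}$) is convex. Observe first that the parameters split cleanly: the vector $\mathbf{c}$ (the weights $w_f$ describing $\K$ together with the constraint weights $w_\varphi$) appears only in the first two groups of terms, while the vector $\mathbf{d}$ (the weights $w_g$ describing $\kappa$) appears only in the third group. Hence it suffices to argue convexity of each group separately in its own variables, since a function of $(\Xi,\Theta)$ that is a sum of a convex function of $\Xi$ and a convex function of $\Theta$ is convex in $(\Xi,\Theta)$.

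First I would handle the two $\mathbf{c}$-dependent pieces jointly, because together they are exactly the negative log-likelihood of a standard log-linear (Gibbs/exponential-family) model. Writing $\K(I_j;\mathbf{c})=\exp\!\big(\sum_{i\in\tids(I_j)}\sum_{f\in F} w_f f(I_j[i])\big)$ (using that $\K$ is tuple-independent and each $\K[i](t)=\exp(\sum_f w_f f(t))$) and combining with the constraint factor $\exp(-\sum_{\varphi\in\Phi} w_\varphi |V(\varphi,I_j)|)$, each term $-\log\big(\K(I_j;\mathbf{c})\exp(-\sum_\varphi w_\varphi|V(\varphi,I_j)|)\big)$ is \emph{linear} (hence convex) in $\mathbf{c}$: it equals $-\sum_{i,f} w_f f(I_j[i]) + \sum_\varphi w_\varphi |V(\varphi,I_j)|$. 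The remaining term is $n\log Z(\mathbf{c})$, where $Z(\mathbf{c})=\sum_I \K(I;\mathbf{c})\exp(-\sum_\varphi w_\varphi |V(\varphi,I)|)$ is a sum of exponentials of affine (in fact linear) functions of $\mathbf{c}$; the log-sum-exp function is convex, and convexity is preserved under composition with an affine map and under multiplication by the nonnegative constant $n$. So the entire $\mathbf{c}$-dependent part is convex in $\mathbf{c}$. (The one technical caveat is that $Z$ is an infinite sum in general; I would note that log-sum-exp is still convex for countable sums wherever it is finite — which it is on the domain where the model is well defined — since it is a pointwise supremum of the finite partial-sum log-sum-exps, each convex, or alternatively argue directly via Hölder's inequality that $\log Z$ is convex.)

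Next I would handle the $\mathbf{d}$-dependent piece, $-\sum_{j=1}^n\sum_{i\in\tids(I_j)} \log \kappa[i,I_j[i];\mathbf{d}](J_j[i])$. Each summand is $-\log\kappa[i,t](t')$ with $t=I_j[i]$, $t'=J_j[i]$ fixed, and by definition $\kappa[i,t](t') = \exp(\sum_{g\in G} w_g g(t,t'))/Z_\kappa(t)$ with $Z_\kappa(t)=\sum_{t''\in\tuples(R)}\exp(\sum_g w_g g(t,t''))$. Thus $-\log\kappa[i,t](t') = -\sum_g w_g g(t,t') + \log Z_\kappa(t)$, which is again (linear) $+$ (log-sum-exp of affine functions of $\mathbf{d}$), hence convex in $\mathbf{d}$; here $\tuples(R)$ is finite by the standing assumption, so $Z_\kappa(t)$ is a genuine finite sum and convexity of log-sum-exp applies without qualification. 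Summing over $j$ and $i$ preserves convexity. The main obstacle is not any single step — each is a textbook fact — but rather carefully checking that the parameter vectors really do separate as claimed and that the infinite partition function $Z(\mathbf{c})$ does not spoil convexity on the admissible domain; once those are pinned down, the result follows by adding up the convex pieces.
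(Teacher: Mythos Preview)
Your proposal is correct and follows essentially the same route as the paper's proof: expand the negative log-likelihood using the Gibbs forms of $\K$ and $\kappa$, observe that the data terms are affine in the parameters while the log-partition terms are log-sum-exp of affine functions, and conclude convexity from the standard fact that log-sum-exp is convex and sums of convex functions are convex. Your treatment is, if anything, slightly more careful than the paper's---you explicitly note the clean $(\Xi,\Theta)$ separation and flag the countable-sum issue for $Z(\mathbf{c})$, which the paper does not address.
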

This proposition implies that the optimization objective for
supervised learning has only global optima. Hence, it is guaranteed
that any gradient-based optimization method will converge to a global
optimum. Next, we study when the gradient of $l(\Xi =
\mathbf{c},\Theta = \mathbf{d}; (I_j,J_j)_{j=1}^n)$ with respect to
$\mathbf{c}$ and $\mathbf{d}$ can be computed efficiently.

To compute the gradient of $l(\Xi = \mathbf{c},\Theta = \mathbf{d};
(I_j,J_j)_{j=1}^n)$, one has to compute $\frac{\partial}{\partial
  c_{l}}\log Z(\mathbf{c})$. To compute this derivative a full
inference step is required~\cite{Koller:2009:PGM:1795555}. This is
because computing this gradient amounts to computing the expected
value for each feature (corresponding to each parameter $c_{l}$)
according to the distribution defined by
$\mathbf{c}$~\cite[Proposition~20.2]{Koller:2009:PGM:1795555}. However,
marginal inference is often
\#P-hard~\cite{Globerson:2015:HIS:3045118.3045350}. In our setup, the
constraints in $\Phi$ correspond to features, and it is not clear
whether the gradient can be efficiently computable.  In general, one
can still estimate the aforementioned gradient by using approximate
inference methods such as \e{Markov Chain Monte Carlo} (MCMC)
methods~\cite{doi:10.1080/01621459.1993.10594284,Singh:2012:MCM:2390948.2391072}
or \e{belief propagation}~\cite{Wainwright03}. While effective in
practice, these methods do not come with guarantees on the quality of
the obtained solution.  Next, we focus on an instance of PUD learning
where exact inference is tractable (linear on $\tuples(R)$), hence, we
can compute the exact gradients of the aforementioned optimization
objective efficiently.

\subparagraph*{Tuple independence} We focus on Gibbs parfactor/update
PUD models where all constraints in $\Phi$ are unary. Here,
$(I_j,J_j)_{j=1}^n$ corresponds to a collection of $n$ examples, each
of which having one tuple identifier. We use $I_j[0]$ and $J_j[0]$ to
denote the tuples in the example $(I_j, J_j)$. In the Appendix, we show that
the negative log-likelihood $l(\Xi = \mathbf{c},\Theta = \mathbf{d};
(I_j,J_j)_{j=1}^n)$ factorizes as $l(\mathbf{c},\mathbf{d};
(I_j,J_j)_{j=1}^n) = \sum_{j=1}^n l^\prime(\mathbf{c},\mathbf{d};
I_j[0],J_j[0]),$ where each $l^\prime(\mathbf{c},\mathbf{d};
I_j[0],J_j[0])$ is a convex function of $\Xi$ and $\Theta$ (see Proposition~\ref{prop:decomp} in the Appendix). Moreover, we show
that the gradient of each $l^\prime(\mathbf{c},\mathbf{d};
I_j[0],J_j[0])$ can be evaluated in time linear to $\tuples(R)$ where
$R$ is the relation corresponding to the tuple identifier associated
with example $(I_{j},J_{j})$ (see Proposition~\ref{prop:linear} in the Appendix). Hence, we get the
following.

\def\thmconvexindep{Given a training collection $(I_j,J_j)_{j=1}^n$
  and a Gibbs parfactor/update PUD model with unary constraints, the
  exact gradient of $l(\Xi = \mathbf{c},\Theta = \mathbf{d};
  (I_j,J_j)_{j=1}^n)$ can be evaluated in $O(n\cdot \max_{R \in
    \scs}|\tuples(R)|)$ time.}

\begin{theorem}\label{thm:convex_indep}
\thmconvexindep
\end{theorem}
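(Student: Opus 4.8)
The plan is to reduce the gradient of the full negative log-likelihood $l$ to $n$ independent per-example gradients, each of which is computable by a single brute-force sweep over the finite, explicitly given tuple space $\tuples(R)$.

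First I would invoke the decomposition afforded by the unary-constraint hypothesis (Proposition~\ref{prop:decomp}). Since every $\varphi\in\Phi$ constrains a single tuple, each constraint is satisfied or violated independently at each identifier, so the intention model becomes tuple independent; consequently the optimized likelihood $\prod_{j=1}^n\tightcirc{\R}{\I}(I_j,J_j)$ over the single-tuple examples factors example by example, and the per-example partition function appearing in~\eqref{eq:l} is the single-tuple partition function $Z_R(\mathbf{c})=\sum_{t\in\tuples(R)\cup\set{\bot}}\exp(\sum_{f\in F}w_f f(t)-\sum_{\varphi\in\Phi}w_\varphi\,\mathbf{1}\{t\text{ violates }\varphi\})$, a finite sum, while the realization normalizer $Z_\kappa(t)$ is already a finite per-identifier sum by its definition. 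Hence $l(\mathbf{c},\mathbf{d};(I_j,J_j)_{j=1}^n)=\sum_{j=1}^n l'(\mathbf{c},\mathbf{d};I_j[0],J_j[0])$ with each $l'$ convex (this convexity and the factorization being exactly what Proposition~\ref{prop:decomp} records), and by linearity of the gradient $\nabla l=\sum_{j=1}^n\nabla l'$; it therefore suffices to bound the cost of a single $\nabla l'$.

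Second I would evaluate $\nabla l'$ (Proposition~\ref{prop:linear}). Writing the per-example term explicitly, $l'=-\sum_{f\in F}w_f f(I_j[0])+\sum_{\varphi\in\Phi}w_\varphi\,\mathbf{1}\{I_j[0]\text{ violates }\varphi\}+\log Z_R(\mathbf{c})-\sum_{g\in G}w_g g(I_j[0],J_j[0])+\log Z_\kappa(I_j[0];\mathbf{d})$, so the only nonlinear pieces are the two log-partition functions, and the standard exponential-family identity gives $\partial_{w_f}\log Z_R=\mathbf{E}_{t\sim\K_R}[f(t)]$, $\partial_{w_\varphi}\log Z_R=-\mathbf{E}_{t\sim\K_R}[\mathbf{1}\{t\text{ violates }\varphi\}]$, and $\partial_{w_g}\log Z_\kappa(t_0)=\mathbf{E}_{t'\sim\kappa[\cdot,t_0]}[g(t_0,t')]$, where $\K_R$ is the single-tuple Gibbs distribution proportional to $\exp(\sum_{f\in F}w_f f(t)-\sum_{\varphi\in\Phi}w_\varphi\,\mathbf{1}\{t\text{ violates }\varphi\})$. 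Since $\tuples(R)$ is finite and given, and all features and constraint tests are computable in polynomial time (our standing assumptions), one pass over $\tuples(R)\cup\set{\bot}$ computes $Z_R$, $Z_\kappa(I_j[0])$ and all of the above expectations simultaneously, accumulating the $O(|F|+|\Phi|+|G|)$ sufficient statistics en route; counting each feature/constraint evaluation as unit cost, this is $O(|\tuples(R)|)$ work for $\nabla l'$.

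Finally I would reassemble: the $j$-th term costs $O(|\tuples(R_j)|)\le O(\max_{R\in\scs}|\tuples(R)|)$, where $R_j$ is the relation of the single identifier of $(I_j,J_j)$, and summing over $j=1,\dots,n$ gives the claimed $O(n\cdot\max_{R\in\scs}|\tuples(R)|)$ bound. The main obstacle --- the point at which the unary-constraint assumption does all the work --- is the factorization of $Z(\mathbf{c})$ in the first step: it is precisely what replaces the generically intractable (indeed possibly \#P-hard, as noted after Proposition~\ref{prop:convex_likelihood}) computation of $\nabla_{\mathbf{c}}\log Z(\mathbf{c})$ by a product of single-tuple partition functions that can be summed by exhaustion. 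Verifying carefully that unary constraints really do induce tuple independence here --- including the bookkeeping of the $\bot$ option in $\K$, and the fact that in a parfactor/update PUD with $\tids(\D)=\tids(J\str)$ each training example contributes exactly one identifier --- is the crux; once that is in place, the rest is routine exponential-family differentiation and linearity of the gradient.
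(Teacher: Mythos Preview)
Your proposal is correct and follows essentially the same approach as the paper: invoke Proposition~\ref{prop:decomp} to factor $l$ into $n$ per-example convex losses, invoke Proposition~\ref{prop:linear} to bound each $\nabla l'$ by a single sweep over $\tuples(R)$, and sum. One small quibble: in the parfactor/\emph{update} setting with $\tids(\D)=\tids(J\str)$ the $\bot$ option is never exercised (every identifier must appear in $I$), so your single-tuple partition function should range over $\tuples(R)$ rather than $\tuples(R)\cup\{\bot\}$; this does not affect the asymptotic bound.
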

The above theorem implies that convex-optimization techniques such as
\e{stochastic gradient descent}~\cite{Boyd:2004:CO:993483} can be used
to scale to large PUD learning instances (i.e., for large $n$).

A question that arises is about the number of examples (i.e., $n$)
required to learn a PUD model. To answer this question, we study the
convergence of \e{supervised $(\scs,\pudl)$-learning} for Gibbs
parfactor/update PUD models with unary constraints. For that, we view
the collection $(I_j,J_j)_{j=1}^n$ as independent and identically
distributed (i.i.d.)~examples, drawn from a distribution that
corresponds to a Gibbs parfactor/update PUD model with unary
constraints and true parameters $\mathbf{c}^\star$ and
$\mathbf{d}^\star$. By the law of large numbers, the \e{Maximum
  Likelihood Estimates} (MLE) $\mathbf{c}$ and $\mathbf{d}$ are
guaranteed to converge to $\mathbf{c}^\star$ and $\mathbf{d}^\star$ in
probability. This means that for arbitrarily small $\epsilon >0 $ we
have that $P(|\mathbf{c} - \mathbf{c}^\star| > \epsilon) \rightarrow
0$ as $n \rightarrow \infty$. The same holds for
$\mathbf{d}$. Moreover, we show that the MLE $\mathbf{c}$ and
$\mathbf{d}$ satisfy the property of \e{asymptotic
  normality}~\cite{kullback1997information}.  Intuitively, asymptotic
normality states that the estimator not only converges to the unknown
parameter, but it converges fast enough at a rate of
$1/\sqrt{n}$. This implies that to achieve the error $\epsilon$ for
$\mathbf{c}$ and $\mathbf{d}$, one only needs $n = O(\epsilon^{-2})$
training examples.

\def\thmconverge{Consider a training collection $(I_j,J_j)_{j=1}^n$
  drawn i.i.d.~from a Gibbs parfactor/update PUD model with unary
  constraints and true parameters $\mathbf{c}^\star$ and
  $\mathbf{d}^\star$. The maximum likelihood estimates $\mathbf{c}$ and $\mathbf{d}$ satisfy \e{asymptotic normality}, that is,
\[
	\sqrt{n} \left( \mathbf{c} - \mathbf{c}^\star \right) \rightarrow \mathcal{N}\left(0,  \Sigma^2_{\mathbf{c}^\star} \right) \text{as~} n \rightarrow \infty \text{~and~}
	\sqrt{n} \left( \mathbf{d} - \mathbf{d}^\star \right) \rightarrow \mathcal{N}\left(0, \Sigma^2_{\mathbf{d}^\star} \right) \text{as~} n \rightarrow \infty
\]
where $\Sigma^2_{\mathbf{c}^\star}$ and $\Sigma^2_{\mathbf{d}^\star}$
are the asymptotic variance of the estimates $\mathbf{c}$ and
$\mathbf{d}$.}

\begin{theorem}\label{thm:converge}
\thmconverge
\end{theorem}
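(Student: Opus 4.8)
\textbf{Proof plan for Theorem~\ref{thm:converge} (asymptotic normality of the MLE).}

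The plan is to invoke the classical asymptotic theory of maximum likelihood estimation for i.i.d.\ samples, once we have reduced the problem to a standard parametric estimation setup. The key observation, already established in Section~\ref{sec:superlearn}, is that for Gibbs parfactor/update PUD models with unary constraints the model is \emph{tuple-independent}: the joint distribution $\tightcirc{\R}{\I}$ factorizes over tuple identifiers, so a draw $(I_j,J_j)$ is effectively a single-tuple pair $(I_j[0],J_j[0])$ drawn i.i.d.\ from a fixed discrete distribution $\mathrm{Pr}_{(\mathbf{c}^\star,\mathbf{d}^\star)}$ over the finite sample space $\tuples(R)\times\tuples(R)$ (finiteness is guaranteed by the assumptions in Section~\ref{sec:learningsetup}). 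Thus we are exactly in the setting of MLE for a parametric family $\{\mathrm{Pr}_{(\mathbf{c},\mathbf{d})}\}$ on a finite sample space, and the negative log-likelihood is $l(\mathbf{c},\mathbf{d};(I_j,J_j)_{j=1}^n)=\sum_{j=1}^n l'(\mathbf{c},\mathbf{d};I_j[0],J_j[0])$ as in Proposition~\ref{prop:decomp}.

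First I would verify the standard regularity conditions for asymptotic normality of the MLE (see, e.g., \cite{kullback1997information}): (i) the parameter $(\mathbf{c}^\star,\mathbf{d}^\star)$ lies in the interior of the parameter space and is identifiable; (ii) the map $(\mathbf{c},\mathbf{d})\mapsto\log\mathrm{Pr}_{(\mathbf{c},\mathbf{d})}(x,y)$ is smooth (indeed it is a difference of a linear form in the parameters and log-partition functions, which are $C^\infty$ in the weights because the sums defining $Z(\mathbf{c})$ and $Z_\kappa(t)$ are finite); (iii) the Fisher information matrix $\mathcal{I}(\mathbf{c}^\star,\mathbf{d}^\star)$ exists and is positive definite; and (iv) the usual domination conditions permitting differentiation under the expectation, which are automatic on a finite sample space. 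Given these, the classical argument applies: a Taylor expansion of the score equation $\nabla l=0$ around the true parameter, combined with the law of large numbers for the observed information $\tfrac1n\nabla^2 l \to \mathcal{I}(\mathbf{c}^\star,\mathbf{d}^\star)$ and the central limit theorem for the score $\tfrac{1}{\sqrt n}\nabla l(\mathbf{c}^\star,\mathbf{d}^\star)\to\mathcal{N}(0,\mathcal{I}(\mathbf{c}^\star,\mathbf{d}^\star))$, yields
\[
\sqrt{n}\,\big((\mathbf{c},\mathbf{d})-(\mathbf{c}^\star,\mathbf{d}^\star)\big)\;\longrightarrow\;\mathcal{N}\big(0,\;\mathcal{I}(\mathbf{c}^\star,\mathbf{d}^\star)^{-1}\big)
\]
as $n\to\infty$. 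Projecting onto the $\mathbf{c}$- and $\mathbf{d}$-coordinates gives the two displayed convergences with $\Sigma^2_{\mathbf{c}^\star}$ and $\Sigma^2_{\mathbf{d}^\star}$ being the corresponding diagonal blocks of $\mathcal{I}(\mathbf{c}^\star,\mathbf{d}^\star)^{-1}$.

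The main obstacle is condition (i)/(iii): identifiability and positive-definiteness of the Fisher information. Without care, exponential-family overparametrization (e.g.\ redundant features, or the fact that the constraint weights $w_\varphi$ enter only through $|V(\varphi,\cdot)|$ and could be linearly dependent with the tuple features) can make $\mathcal{I}$ singular, in which case asymptotic normality holds only after reducing to a minimal sufficient parametrization. I would handle this by either (a) explicitly assuming the feature set $F\cup G\cup\{|V(\varphi,\cdot)|\}_{\varphi\in\Phi}$ is affinely independent so the model is a regular (minimal) exponential family, which makes $\mathcal{I}$ positive definite and identifiability immediate; or (b) stating the result up to the standard caveat and working in the quotient by the linear space of parameter-redundancies. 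The remaining pieces --- smoothness, CLT for the score, LLN for the Hessian, and the consistency needed to justify the Taylor expansion around $(\mathbf{c}^\star,\mathbf{d}^\star)$ (which follows from convexity of $l'$, Proposition~\ref{prop:decomp}, together with the earlier-noted convergence in probability of the MLE) --- are routine given the finiteness of $\tuples(R)$.
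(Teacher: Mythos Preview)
Your plan is essentially the paper's approach: both invoke the classical Taylor-expansion argument for MLE in an exponential family, reduce to i.i.d.\ single-tuple samples via tuple independence, and handle positive-definiteness of the Fisher information by a full-rank/affine-independence assumption on the feature vectors (the paper states this as Lemma~\ref{lemmaFischerPositive}, which is exactly your option (a)).

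The one structural point the paper exploits that you do not is the additive separation of the log-likelihood in $(\Xi,\Theta)$: since $\log\tightcirc{\R}{\I}(I,J)=\log\I_\Xi(I)+\log\R_\Theta(J\mid I)$, the score and Hessian are block-diagonal, so the paper treats $\Xi$ (unconditional exponential family) and $\Theta$ (conditional exponential family given $I$) separately and obtains $\Sigma^2_{\mathbf{c}^\star}=\mathcal{I}_1(\Xi^\star)^{-1}$ and $\Sigma^2_{\mathbf{d}^\star}=\mathcal{I}_1(\Theta^\star)^{-1}$ directly. Your joint treatment is correct too, but to recover these explicit forms you would need to note the block-diagonality rather than merely ``project onto the $\mathbf{c}$- and $\mathbf{d}$-coordinates'' of a general inverse.
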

Note that both the \e{multivariate Gaussian} distribution
$\mathcal{N}(\mathbf{\mu},\Sigma^2)$ and the \e{asymptotic variance}
are defined in classic statistics
literature~\cite{kullback1997information}.

\subsection{Unsupervised Learning}
\label{sec:learndirty}
We now present preliminary results for unsupervised
$(\scs,\pudl)$-learning. We are given a training collection
$(J_j)_{j=1}^n$ and seek to find $\mathbf{c}$ and $\mathbf{d}$ that
minimize the negative log-likelihood $l(\Xi =
\mathbf{c},\Theta=\mathbf{d}; (J_j)_{j=1}^n) =
-\sum_{j=1}^n\log\sum_{I}\tightcirc{\R}{\I}(I,J_j)$.  Again, there is
no analytical solution for finding optimal $\mathbf{c}$ and
$\mathbf{d}$. Hence, one needs to use iterative gradient-based
approaches, and again the questions of convexity and gradient
computation arise.  In the general case, this function is \e{not}
necessarily convex. Hence, gradient-based methods are not guaranteed
to converge to a global optimum. However, one can still solve the
corresponding optimization problem using non-convex optimization
methods~\cite{bertsekas1999nonlinear}. Nevertheless, we show next that
when realizers do not introduce \e{too much error}, we can establish
guarantees.

\subparagraph*{Low-noise condition} Consider a Gibbs parfactor/update PUD
model. We say that a PUD defined by $\I_{\Xi/\mathbf{c}}$ and
$\R_{\Theta/\mathbf{d}}$ satisfies the \e{low-noise condition} with
probability $p$ if the realizer introduces an
error with probability at most $p$. That is, for all intensions $I$
and identifiers $i\in\tids(I)$, it is the case that $\prob(J[i] =
I[i]\mid I) \ge 1 - p$. We have the following.

\def\convexnll{ Consider a Gibbs parfactor/update PUD model where
  $\Xi$ takes values from a compact convex set. Given a training
  collection $(J_j)_{j=1}^n$, there exists a fixed probability $p > 0$
  such that, under the low-noise condition with probability $p$,
  the negative log-likelihood $l(\Xi = \mathbf{c},\Theta=\mathbf{d};
  (J_j)_{j=1}^n)$ is a convex function of $\Xi$.  }

\begin{theorem}\label{thm:convexnll}
\convexnll
\end{theorem}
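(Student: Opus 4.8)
The plan is to isolate the effect of the hidden variable $I$ inside the marginal likelihood and show that, when the noise is small, the log-marginal $\log\sum_I \tightcirc{\R}{\I}(I,J_j)$ is a small perturbation of a convex function of $\Xi$. First I would write, for each example $J_j$, $\sum_I \tightcirc{\R}{\I}(I,J_j) = \sum_I \D(I)\prod_{i\in\tids(I)}\kappa[i,I[i]](J_j[i])$, and split the sum according to how much $I$ differs from $J_j$. Under the low-noise condition with probability $p$, for every tuple identifier $i$ we have $\kappa[i,J_j[i]](J_j[i]) \ge 1-p$, so the ``no-change'' term $I = J_j$ contributes at least $\D(J_j)\cdot(1-p)^{|\tids(J_j)|}$, whereas every other term $I\neq J_j$ carries a factor that is bounded by a function that vanishes as $p\to 0$ relative to the dominant term (each changed cell costs a factor of order $p$, up to the relative weights coming from $\K$ and the constraint parfactors). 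Because $\Xi$ ranges over a compact convex set, all the feature sums and constraint counts are uniformly bounded, so the ratio $\bigl(\sum_{I\neq J_j}\tightcirc{\R}{\I}(I,J_j)\bigr)\big/\bigl(\tightcirc{\R}{\I}(J_j,J_j)\bigr)$ is bounded by $C\cdot p$ for a constant $C$ depending only on the model and the compact parameter region, uniformly in $\mathbf{c}$.

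Next I would write $\log\sum_I \tightcirc{\R}{\I}(I,J_j) = \log \tightcirc{\R}{\I}(J_j,J_j) + \log\bigl(1 + \rho_j(\mathbf{c},\mathbf{d})\bigr)$ where $0 \le \rho_j \le Cp$. The first term, $\log \tightcirc{\R}{\I}(J_j,J_j) = \log\K(J_j;\mathbf{c}) - \sum_{\varphi\in\Phi} w_\varphi|V(\varphi,J_j)| - \log Z(\mathbf{c}) + \sum_i \log\kappa[i,J_j[i]](J_j[i])$, is, by exactly the argument behind Proposition~\ref{prop:convex_likelihood}, a concave function of $(\Xi,\Theta)$ in the ``$\log$-of-likelihood'' sign, hence contributes a \emph{convex} term to $l$ (the $\log Z$ term is convex, the linear feature/constraint terms are affine, and $\log\kappa$ with its normalizer is concave). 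So $l(\mathbf{c},\mathbf{d};(J_j)_{j=1}^n) = \Lambda(\mathbf{c},\mathbf{d}) - \sum_j \log(1+\rho_j(\mathbf{c},\mathbf{d}))$ where $\Lambda$ is convex and the perturbation $\sum_j \log(1+\rho_j)$ is a $C^2$ function whose value, gradient, and Hessian are all $O(np)$ uniformly on the compact set (here I need $\rho_j$ to be twice continuously differentiable with bounded derivatives, which follows because it is a finite sum/ratio of exponentials of bounded affine functions on the compact region). The key point is that $\Lambda$ is \emph{strongly} convex in $\Xi$ on the compact set --- this is where I would invoke that $\log Z(\mathbf{c})$ has Hessian equal to the covariance matrix of the features, which is positive definite provided the features are not affinely dependent (a genericity assumption one may fold into the model, or obtain after a standard reparametrization removing redundant features), and strong convexity holds with some modulus $\mu>0$ on the compact set.

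Finally I would choose $p>0$ small enough that $Cnp \cdot (\text{bound on Hessian of }\log(1+\cdot)) < \mu$, so that the Hessian of $l$ in the $\Xi$ directions stays positive definite everywhere on the compact set; this gives convexity (indeed strong convexity) of $\Xi\mapsto l(\Xi=\mathbf{c},\Theta=\mathbf{d};(J_j)_{j=1}^n)$, as claimed. The main obstacle I anticipate is controlling the perturbation term $\rho_j$ \emph{uniformly over the whole compact parameter set} rather than just pointwise: one must bound not only $\rho_j$ but its first and second derivatives, and the derivatives of $\rho_j$ involve ratios where both numerator and denominator depend on $\mathbf{c}$ through $\log Z(\mathbf{c})$ and through the constraint counts on the non-trivial completions $I\neq J_j$. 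Making these bounds depend only on the diameter of the compact set, the number of features, and the maximal arity/number of violations --- and not on $n$ beyond the overall factor --- is the technical heart of the argument; compactness of the $\Xi$-domain (assumed in the statement) is exactly what makes this possible, and a secondary subtlety is that the statement only asserts convexity in $\Xi$, so I do not need uniformity in $\Theta$ beyond what compactness of the relevant $\Theta$-region (or the structure of the Gibbs realizer) already provides.
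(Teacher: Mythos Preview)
Your approach is correct and reaches the same conclusion, but it takes a different route from the paper's proof. The paper computes the Hessian of $l(\Xi)$ \emph{directly} and obtains the closed form
\[
  \nabla^2_{\Xi}\, l(\Xi) \;=\; \mathbf{Cov}_{I\sim\I_{\Xi}}\!\big[u_\I(I)\big]\;-\;\mathbf{Cov}\!\big[u_\I(I)\,\big|\,J^\star\big],
\]
i.e., prior covariance minus posterior covariance of the intention features. It then lower-bounds the first term by a $\delta>0$ on the compact set (Fisher information positivity, Lemma~\ref{lemmaFischerPositive}) and upper-bounds the second by $p\cdot\max_J\|u_\I(J)\|^2$ using the low-noise condition; choosing $p$ small makes the difference positive definite.

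Your argument instead isolates the ``supervised'' term at $I=J_j$ and treats everything else as a perturbation $\log(1+\rho_j)$ whose $C^2$-norm you bound by $O(p)$ uniformly on the compact set. Both arguments rest on the same two pillars---strict positivity of the Fisher information on a compact parameter region, and posterior concentration on $I=J_j$ under low noise---so they are morally equivalent. The paper's identity is cleaner: once you recognize the Hessian as a difference of covariances, no derivative-of-a-ratio bookkeeping is needed, and the bound on the ``bad'' term is immediate. Your decomposition, on the other hand, makes the link to the supervised case (Proposition~\ref{prop:convex_likelihood}) explicit and is arguably more transparent about \emph{why} low noise helps; it also generalizes more readily if one wanted, say, quantitative strong-convexity constants or higher-order control. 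The technical crux you flag---uniform bounds on $\rho_j$ and its first and second derivatives over the compact set---is exactly the work that the paper's covariance identity absorbs in one line. One small slip: the low-noise condition is stated per tuple identifier, not per cell, so ``each changed cell costs a factor of order $p$'' should read ``each changed tuple''; this does not affect your argument.
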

Hence, in certain cases, it is possible to find a global optimum of
the overall negative log-likelihood. For that, the low-noise condition
should hold with probability $p$ that is also bounded, that is, it
cannot be arbitrarily large.  We can show that, if the low-noise
condition holds with probability $p$, it is indeed bounded for Gibbs
parfactor/update PUDs with unary constraints (see Proposition~\ref{prop:bounded} in the Appendix). We then find a
global optimum as follows.  We assume a simple parametric realization
$\R_{\Theta}$, that is, a model for which we can efficiently perform
\e{grid search} over the space of parameter values $\mathbf{d}$. To
find the global optimum for the negative log-likelihood, we solve a
series of convex optimization problems over $\Xi$ for different fixed
$\Theta = \mathbf{d}$. For each of these problems, we are guaranteed
to find a corresponding global optimum $\mathbf{c}$, and by performing
a grid search we are guaranteed to find the overall global optima
$\mathbf{c}$ and $\mathbf{d}$. This approach has been shown to
converge for similar simple non-convex
problems~\cite{conf/icml/SaRO15,ma2017implicit}.

Finally, similarly to supervised learning, the negative log-likelihood
for fixed $\Theta=\mathbf{d}$ decomposes into a sum of convex losses
over $(J_j)_{i=1}^n$ where each example $J_j$ contains a single
tuple. We use $J_j[0]$ to denote that tuple. We have that $l(\Xi =
\mathbf{c}, \Theta=\mathbf{d}, (J_j)_{j=1}^n) = \sum_{j=1}^n
l^{\prime}(\mathbf{c},\mathbf{d};J_j[0])$ where $\mathbf{d}$ is
fixed. We show that the gradient of each
$l^{\prime}(\mathbf{c},\mathbf{d};J[0])$ can be evaluated in polynomial time to $\tuples(R)$ where $R$ is the relation corresponding to
the tuple identifier associated with the example $(J_j)$.

\def\thmconvexunsup{Given a training collection $(J_j)_{j=1}^n$ and a
  Gibbs parfactor/update PUD model with unary constraints that
  satisfies the low-noise condition with probability $p$, the exact
  gradient of $l(\Xi = \mathbf{c},\Theta = \mathbf{d}; (J_j)_{j=1}^n)$
  with respect to $\Xi$ can be evaluated in time $O(n\cdot \max_{R \in
    \scs}|\tuples(R)|^2)$ for every fixed $\mathbf{d}$.}

It is left for future work to find sufficient conditions for $p$ to be
bounded for PUD models with more general constraints, as well as the
complexity and convergence aspects.

%%% Conclusion %%%%
\section{Concluding Remarks}
\label{sec:conclusion}
Taking inspiration from our experience with the HoloClean
system~\cite{holoclean}, we introduced the concept of Probabilistic
Unclean Databases (PUDs), a framework for unclean data that follows a
noisy channel approach to model how errors are introduced in data. We
defined three fundamental problems in the framework: cleaning,
probabilistic query answering, and PUD learning (parameter
estimation).  We introduced PUD instantiations that generalize the
deterministic concepts of subset repairs and update repairs, presented
preliminary complexity, convergence, and learnability results.

This paper opens up many research directions for future
exploration. One is to investigate the complexity of cleaning in more
general configurations than the ones covered here. Moreover, in cases
where probabilistic cleaning is computationally hard, it is of natural
interest to find \e{approximate} repairs that have a probability
(provably) close to the maximum. Another direction is the complexity
of probabilistic query answering and approximation thereof, starting
with the most basic constraints (e.g., primary keys) and queries
(e.g., determine the marginal probability of a fact). Finally, an
important direction is to devise learning algorithms for cases beyond
the ones we discussed here. In particular, it is of high importance to
understand when we can learn parameters without training data, based
only on the given dirty database, under more general noisy realization
models than the ones discussed in this paper.

\newpage
\bibliographystyle{abbrv}
\bibliography{prob_unclean}

%\newpage
\appendix\label{appendix}
\def\Const{\mathrm{Const}}

\eat{

\section{Definitions of Preliminary Concepts}\label{sec:prelims_apd}
We present detailed definitions of concepts introduced in
Section~\ref{sec:prelims}.

\subsection{Relational Databases and Integrity Constraints}
We provide definitions required to define the intention and realization models of a PUD. First, we focus on relational databases. A \e{relation signature} (or simply \e{signature} for short) is a finite sequence $\alpha=(A_1,\dots,A_k)$ of distinct \e{attributes}. When there is no risk of ambiguity, we view $\alpha$ as a \e{set}, rather than \e{sequence}, of attributes; hence, we may say $A\in\alpha$ to denote that $A$ is an attribute that occurs in $\alpha$.  A \e{tuple} $t$ over a signature $\alpha$ maps the attributes of $\alpha$ to atomic values (e.g., numbers and strings). We write $t.A$ instead of $t(A)$. If $\alpha=(A_1,\dots,A_k)$, then we identify a tuple $t$ over $\alpha$ with the sequence $(t.A_1,\dots,t.A_k)$. 

A \e{relation} over $\alpha$ is a finite set of tuples over $\alpha$. A \e{relational schema} (or just \e{schema} for short) is a finite set $\scs$ of \e{relation symbols}, where every relation symbol $R$ is associated with a signature denoted by $\sig(R)$. We say that $R$ is \e{$k$-ary} if $\sig(R)$ consists of $k$ attributes. A \e{database} $D$ over a schema $\scs$ associates with each relation symbol $R$ a relation over $\sig(R)$; we denote this instance by $R^D$. To facilitate modeling value updates by the realization model of a PUD, we assume that $D$ associates every tuple $t$ with a unique \e{tuple identifier}.  We denote by $\tids(D)$ and $\tids_R(D)$ the sets of all identifiers of tuples of $D$ and $R^D$, respectively. For $i\in\tids_R(D)$ we denote by $D[i]$ and $R^D[i]$ the tuple with identifier $i$.  Hence, it holds that $R^D=\set{R^D[i]\mid i\in\tids_R(D)}$.

We now define integrity constraints over a relational schema $\scs$. We use $\varphi$ to denote the integrity constraints defined over $\scs$. Similar to previous frameworks for data repairs we consider integrity constraints to be specified as input by the user along with schema $\scs$. Integrity constraints may refer to \e{Functional Dependencies} (FDs), \e{conditional FDs}~\cite{DBLP:conf/icde/BohannonFGJK07}, \e{denial constraints} (DCs)~\cite{DBLP:journals/jiis/GaasterlandGM92}, or referential constraints~\cite{Date:1981:RI:1286831.1286832}. In this paper, we focus on denial constraints as they are shown to generalize other integrity constraints such as FDs and CFDs. 

We consider a DC $\varphi$ to be phrased in Tuple Relational Calculus (TRC) as $R_1(X_1)\land\dots\land R_k(X_k)\,\,\Rightarrow\,\,
\neg\psi(X_1,\dots,X_k)$ where each $R_i$ is a relation symbol of
$\scs$, and $\psi$ is a conjunction of comparisons of the form
$X_i.A\mathrel{\theta} X_j.B$ for a comparison operator $\theta$
(e.g., $=$, $\neq$, $<$, and so on). Later in the paper, we may denote $\varphi$ by
$\varphi(X_1,\dots,X_k)$ to specify its variables. For example, the
FD $R:A\rightarrow B$ is expressed as the DC $R(X_1)\land R(X_2)\Rightarrow \neg(X_1.A=X_2.A\land X_1.B\neq
X_2.B)\,.$

\subsection{Inconsistent Database Instances and Minimum Repairs}
Data errors can be manifested as violations of denial constraints. A
\e{violation} of a DC $\varphi(X_1,\dots,X_k)$ in a database $D$ over
schema $\scs$ is a sequence $(i_1,\dots,i_k)$ of tuple identifiers of
$D$ such that statement $\varphi(D[i_1],\dots,D[i_k])$ is false. Let
$\varphi$ be a set of DCs over schema $\scs$. We say that $D$ is
\e{consistent} if $D$ \e{satisfies} $\varphi$. We say that $D$ is
\e{inconsistent} otherwise.

We now define a \e{minimum repair}. We first introduce the concept of \e{weighted} databases. We consider two variations of weighted databases: (1) A \e{tuple-weighted} database $D$ is one where every tuple
  identifier $i\in\tids(D)$ has a weight $w_D(i)$. (2) A \e{value-weighted} database $D$ is one
  where every tuple identifier $i\in\tids_R(D)$ is associated with a weight $w^{R.A}_D(i)$ for
  every relation symbol $R$ and attribute $A$ of $R$. To model unweighted databases, we consider
  all weights to be one. We now introduce the concepts of a subset and an update over a database. Let $D$ and $D'$ be databases over the same schema $\scs$. We say that $D'$ is a \e{subset of} $D$, denoted $D'\subseteq D$, if $D'$ is obtained from $D$ by deleting tuples; that is, for all relation symbols $R$ we have $\tids_R(D')\subseteq\tids_R(D)$, and for all $i\in\tids(D')$ we have $R^{D'}[i]=R^D[i]$. We say $D'$ is an \e{update} of $D$, denoted $D\approx D'$, if $D'$ is obtained from $D$ by changing cell values; that is, for all relation symbols $R$ we have $\tids_R(D')=\tids_R(D)$. We assume that weights do not change in subset and update repairs.

We can now define minimum repairs. A \e{minimum subset
  repair} of $D$ is a consistent subset $D'$ of $D$ with a minimal weighted difference $\sum_{i\in\tids(D)\setminus\tids(D')}w_D(i)$. A \e{minimum update repair} of $D$ is a consistent update $D'$ of $D$ with a minimal $H(D,D')$, where $H(D,D')$, the (weighted) \e{Hamming distance} between $D$ and $D'$, is given by $H(D,D')\eqdef\sum_{R.A}\sum_{\substack{i\mid D[i].A\neq D'[i].A}}w_D^{R.A}(i)\,.$

\section{MLD PUDs as Markov Parametric PUDs}
\label{sec:markovpuds_apd}
We demonstrate how parametric MLD/update PUDs can be written as Markov parametric PUDs. Recall that:
\begin{equation*}
	\R_{\I}(I,J) = \I_{\Xi}(I)\cdot\R_{\Theta}(J|I)
\end{equation*}

We first focus on intention $\I_{\Xi}$ and consider that it corresponds to an MLD $\I = (\P, \Phi, w)$. We show that such an MLD follows the general form of Markov parametric PUDs when the tuple generator $\tg_R$ for every relation symbol $R$ in TIID $\P$ of $\M$ is a parametric log-linear model with: 
\[
\tg_R[D[i]] = \exp \left(\sum_{k_R} w_{k_R} \cdot f_{k_R}(D[i])\right)
\]
Recall that for an MLD $\I = (\P, \Phi, w)$ we have that:
\begin{equation*}
  \I(I)\eqdef\frac{1}{Z}\times\P(I)\times\prod_{\varphi\in\Phi}e^{-w(\varphi)\cdot |V(I,\varphi)|}
\end{equation*}
where $Z = \sum_{I^\prime \sim \Omega}{\prod_{\varphi\in\Phi}e^{-w(\varphi)\cdot |V(I^\prime,\varphi)|}}$.
First, by following the log-exp re-write rule we can rewrite $\P(I)$ as:
\begin{align*}
  \P(I) = \exp\Bigg(\sum_{R \in \scs} \Big(& \log(p_{R}) |\tids_R(I)| \,\, + \\ &
      \sum_{i \in \tids_R(I)} \sum_{k_R} w_{k_R} \cdot
      f_{k_R}(I[i])+ \log(1 - p_{R}) |\varrho_R\setminus
      \tids_R(I)|\Big)\Bigg)
\end{align*}
We can also rewrite:
\begin{equation*}
	\prod_{\varphi \in \Phi}e^{-w(\varphi)\cdot |V(I,\varphi)|} = \exp \left(\sum_{\varphi \in \Phi} -w(\varphi) \cdot |V(I,\varphi)|\right)
\end{equation*}

Eventually we can write $\I(I)$ as:
\begin{align*}
  \I(I) = & \frac{1}{Z}\exp\sum_{R \in \scs} \Bigg( \log(p_{R}) |\tids_R(I)|\,\,+\\
        & \hspace{6em} \sum_{i \in \tids_R(I)} \sum_{k_R} w_{k_R} \cdot f_{k_R}(I[i])+ \log(1 - p_{R}) |\varrho_R\setminus \tids_R(I)|\Bigg)\\
&\times \exp\sum_{\varphi \in \Phi}\left( -w(\varphi) \cdot
  |V(I,\varphi)|\right)
\end{align*}

Given the above, concatenating all parameters $\log(p_R)$, $w_{k_R}$, $\log(1 - p_R)$, and $-w(\phi$) into a single vector we obtain the parameter vector $\Xi$ of intension $\I$, while quantities $|\tids_R(I)|$ , $f_{k_R}(\cdot)$, $|i\in \varrho_R\setminus \tids_R(I)|$, $|V(I,\varphi)|$ form the sufficient statistics for intention $\I$. 

We now focus on the realizer. First we consider a subset realizer. Recall that:
\begin{equation*}
	\R(J,I) = \prod_{R\in\scs}
\Big(\hspace{-0.5em}\prod_{\substack{i\in\\\tids_R(J)\setminus\\\tids_R(I)}}\hspace{-0.5em}
p_R\cdot\tg(J[i])\Big)
\times 
\Big(\hspace{-0.5em}\prod_{\substack{i\in\varrho_R\setminus\\\tids_R(J)}}\hspace{-0.5em}
(1-p_R)\Big)\,.
\end{equation*}

Following the log-exp rewrite trick we have that:
\begin{align*}
	\R(J|I) 
	= 
	\exp
	\sum_{R \in \scs} \bigg(
	&\log(p_R)|i\in \tids_R(J)\setminus \tids_R(I)| 
	\\
	&+
	\sum_{\substack{i\in \tids_R(J)\setminus\\\tids_R(I)}}\hspace{-0.5em}\sum_{k_R}w_{k_R}\cdot f_{k_R}(J[i]) +
	\log(1-p_R)|i\in\varrho_R\setminus \tids_R(J)|
	\bigg)
\end{align*}
Given the above, concatenating all parameters $\log(p_R)$, $w_{k_R}$, and $\log(1 - p_R)$ into a single vector we obtain the parameter vector $\Theta$ of realizer $\R$, while quantities $|i\in \tids_R(J)\setminus \tids_R(I)|$ , $f_{k_R}(\cdot)$, and $|i\in\varrho_R\setminus\tids_R(J)|$ form the sufficient statistics of realizer $\R$. It is easy to see that taking the product of $\I_{I}$ and $\R(J|I)$ corresponds to a Markov parametric PUD. 

A similar rewriting process can be followed for update realizers described in Section~\ref{sec:applications}. The only difference is that we require that the value generator $\vg$ corresponds to a parametric log-linear model with: 
\[
\vg_R.A[D[i]] = \exp \left(\sum_{k_{R.A}} w_{k_{R.A}} \cdot f_{k_{R.A}}(D[i])\right)
\]

Given this we have the following rewriting for an update realizer:
\begin{align*}
	\R(J|I)= & \exp\sum_{R \in \scs}\sum_{A \in R} \left(\log(p_{R.A})|i\in\tids_R(J)\mid I[i].A\neq J[i].A|\right)	\\
	&\times\exp\sum_{R \in \scs}\sum_{A \in R}
	\sum_{\substack{i\in\tids_R(J)\mid\\I[i].A\neq J[i].A}}\hspace{-0.5em}\sum_{k_{R.A}}w_{k_{R.A}}\cdot f_{k_{R.A}}(J[i])\\	
	&\times\exp\sum_{R \in \scs}\sum_{A \in R} \sum_{\substack{i\in\tids_R(J)\mid\\I[i].A= J[i].A}}\hspace{-0.5em}\log\left(1-p_{R.A} + p_{R.A}\cdot \vg_{R.A}(J[i])\right)
\end{align*}

Given the above, it is easy to see that combining $\I$ with an update
realizer corresponds to a Markov parametric PUD as well.  }

\section{Proofs}
\noindent In this section, we provide proofs that are missing from the body
of the paper.

\subsection{Proof of Theorem~\ref{thm:mli-subset}}
\begin{reptheorem}{\ref{thm:mli-subset}}
	\thmmlisubset
      \end{reptheorem}
\begin{proof}
  We analyze the probability of an MLI $I$.  We have the following for
  all subsets $I$ of $J\str$.
\begin{equation}\label{eq:riijstr-subset}
  \tightcirc{\R}{\I}(I,J\str) \,\,=\,\, \D(I)
\times\hskip-0.5em
\prod_{\substack{i\in\tids(J\str)\setminus\\\tids(I)}}\hskip-1em\tau[i](J\str[i])
\times\hskip-0.5em
\prod_{\substack{i\in\tids(\D)\setminus\\\tids(J\str)}}\hskip-1em\tau[i](\bot) 
\,\,\sim\,\,
\D(I)
\times\hskip-0.5em
\prod_{\substack{i\in\tids(J\str)\setminus\\\tids(I)}}\hskip-1em\tau[i](J\str[i])
\end{equation}
The proportionality is due to the fact that the third factor in the
first multiplication is the same for all $I$. If $w_\varphi$ is large
enough for all $\varphi\in\Phi$, then we can assume that every $I$
that satisfies $\Phi$ has a higher probability than every $I$ that
violates $\Phi$. Observe that our integrity constraints and assumption
that $\K[i](\bot)>0$ and $\tau[i](J\str[i])>0$ implies that at least
one consistent intention has a nonzero probability---the \e{empty}
database.  Hence, we can assume to begin with that the MLI $I$ is
selected from the subsets of $J\str$ that satisfy $\Phi$, and
therefore $\D(I)\sim\K(I)$. Moreover, we have the following.
\begin{equation}
\D(I)\sim\K(I)=
\prod_{i\in\tids(I)}\K[i](I[i])
\times\hskip-0.5em
\prod_{\substack{i\in\tids(\K)\setminus\\\tids(I)}}\hskip-0.5em\K[i](\bot)
\,\,\sim\,\,
\prod_{i\in I}\K[i](I[i])
\times\hskip-0.5em
\prod_{\substack{i\in\tids(J\str)\setminus\\\tids(I)}}\hskip-0.5em\K[i](\bot)
\label{eq:pi-subset}
\end{equation}
Here, the proportionality is due to the fact that we divide the
probability by the same factor for all $I$. From
Equations~\eqref{eq:riijstr-subset} and \eqref{eq:pi-subset}, we
conclude the following.

\begin{align}
\tightcirc{\R}{\I}(I,J\str)  \,\,\sim\,\,& 
\prod_{i\in I}\K[i](I[i])
\times\hskip-0.5em
\prod_{\substack{i\in\tids(J\str)\setminus\\\tids(I)}}\hskip-1em
\K[i](\bot)\cdot \tau[i](J\str[i])
\notag
\\=\,\,&
\prod_{i\in I}
\frac
{\K[i](I[i])}
{\K[i](\bot)\cdot \tau[i](J\str[i])}
\times
\prod_{\substack{i\in\tids(J\str)}}\hskip-1em
\K[i](\bot)\cdot \tau[i](J\str[i])
\label{eq:subset-i-j-star}
\\\sim\,\,&
\prod_{i\in I}
\frac
{\K[i](I[i])}
{\K[i](\bot)\cdot \tau[i](J\str[i])}
\,\,=\,\,
\prod_{i\in I}
\frac
{\K[i](J\str[i])}
{\K[i](\bot)\cdot \tau[i](J\str[i])}
\,\,=\,\,
\prod_{i\in I}q(i)
%
%\times\hspace{-2em}
%\prod_{i\in\tids(J\str)\setminus\tids(I)}\hspace{-2em}(q(i)\cdot \ol{p'}(i))
%\notag =
%\left(\prod_{i\in\tids(J\str)}\hspace{-1em}q(i)\cdot \ol{p'}(i)\right)
%\times
%\left(\prod_{i\in I}\frac{q'(i)}{q(i)\cdot \ol{p'}(i)}\right)
%\notag\\ &\sim
%\prod_{i\in I}\frac{q'(i)}{q(i)\cdot \ol{p'}}
\label{eq:max-subset}
\end{align}
Again, the last proportionality is due to the fact that the right
factor in~\eqref{eq:subset-i-j-star} is the same for all $I$.  The
second equality in~\eqref{eq:max-subset} is due to the fact that $I$
is a subset of $J\str$, and so, $I[i]=J\str[i]$ for all
$i\in\tids(I)$.  Hence, $I$ is a most likely repair if and only if $I$
satisfies $\Phi$ and maximizes~\eqref{eq:max-subset}, which is the
same as maximizing the sum $\sum_{i\in\tids(I)}\log q(i)$, as claimed.
\end{proof}

\subsection{Proof of Theorem~\ref{thm:mli-update}}\label{sec:proofs_cleaning}
\begin{reptheorem}{\ref{thm:mli-update}}
\thmmliupdate
\end{reptheorem}
\begin{proof}
Since we consider only updates $I$ of $J\str$, we can ignore all of
the tuples $i\in\tids(\D)\setminus J\str$. Hence, we have the
following.
\begin{equation}\label{eq:riijstr-update}
  \tightcirc{\R}{\I}(I,J\str) \,\,=\,\, \D(I)
\times\hskip-0.5em
\prod_{i.A\in\cells(J\str)}\hskip-1em\kappa_{A}[i,I[i].A](J\str[i].A)
\end{equation}
If $w_\varphi$ is large enough for all $\varphi\in\Phi$, then we can
assume that every $I$ that satisfies $\Phi$ has a higher probability
than every $I$ that violates $\Phi$. Hence, due to our assumption that
at least one consistent update has a nonzero probability, we can
assume to begin with that the MLI $I$ is selected from the subsets of
$J\str$ that satisfy $\Phi$, and therefore $\D(I)\sim\K(I)$. Hence,
from~\eqref{eq:riijstr-update} we conclude the following.
\begin{align}\label{eq:update2}
  \tightcirc{\R}{\I}(I,J\str) \,\,&\sim\,\,
\prod_{i\in\tids(I)}\K[i](I[i])\times
\prod_{i.A\in\cells(J\str)}\hskip-1em\kappa_{A}[i,I[i].A](J\str[i].A)
\notag
\\
&=\,\,
\prod_{i.A\in\cells(J\str)}\hskip-1em\K_A[i](I[i].A)\cdot\kappa_A[i,I[i].A](J\str[i].A)
\end{align}
The last equation is due to the fact that $I$ is an update of $J\str$
and that $\U$ is an attribute-independent parfactor/update PUD.
Hence, maximizing the probability of $I$ amounts to minimizing the weight function
$w(i.A,a)=-\log(\K_A[i](a)\cdot\kappa_A[i,a](J\str[i].A))$.
\end{proof}
\subsection{Proof of Theorem~\ref{thm:mli-key}}
\begin{reptheorem}{\ref{thm:mli-key}}
	\thmmlikey
\end{reptheorem}

\begin{proof}
  Our goal is to compute a subset $I$ of $J\str$ that maximizes
  $\tightcirc{\R}{\I}(I,J\str)$.  As argued in the proof of
  Theorem~\ref{thm:mli-subset}, we can ignore tuple identifiers not in
  $\tids(J\str)$. By a \e{block} of $J\str$ we refer to the (maximal)
  set of tuples of $J\str$ that share a key value in a single
  relation. By our assumption, every block of more than a single tuple
  is a violation of the corresponding key constraint. With a
  straightforward spelling out of $\tightcirc{\R}{\I}(I,J\str)$ we can
  observe that this probability factorizes across the blocks of
  $J\str$ and their corresponding subsets in $I$, due to probabilistic
  independence between blocks.  In particular, it suffices to find an
  MLI for each block separately, and take the union of the MLIs as our
  solution $I$.  Therefore, we can assume that $J\str$ has a single
  block; that is, all the tuples in $J\str$ belong to the same
  relation, and all of them share the same key.

  We denote our single key constraint by $\varphi$. Recall that
  $w(\varphi)$ is the weight of $\varphi$ in $\D$, and that
  $V(\varphi,I)$ the set of violations of $\varphi$ in $I$.  From our
  assumptions it follows that $|V(\varphi,I)|=|I|\cdot (|I|-1)$.  We
  conclude the following.
\begin{align*}
\tightcirc{\R}{\I}&(I,J\str)
\,\,=\,\,
\D(I)\times
\hskip-1em
\prod_{\substack{i\in\tids(J\str)\setminus\\\tids(I)}}
\hskip-1em
\tau(J\str[i])
\\&=\,\,
\prod_{i\in\tids(I)}\K(I[i])
\times
\exp(-w(\varphi)\cdot|I|\cdot (|I|-1))
\times
\hskip-1em
\prod_{\substack{i\in\tids(J\str)\setminus\\\tids(I)}}
\hskip-1em
\tau(J\str[i])
\\&=\,\,
\prod_{i\in\tids(I)}\frac{\K(I[i])}{\tau(J\str[i])}
\times
\exp(-w(\varphi)\cdot|I|\cdot (|I|-1))
\times
\prod_{\substack{i\in\tids(J\str)}}
\hskip-1em
\tau(J\str[i])
\\&\sim\,\,
\prod_{i\in\tids(I)}\frac{\K(I[i])}{\tau(J\str[i])}
\times
\exp(-w(\varphi)\cdot|I|\cdot (|I|-1))
\\&=\,\,
\prod_{i\in\tids(I)}\frac{\K(J\str[i])}{\tau(J\str[i])}
\times
\exp(-w(\varphi)\cdot|I|\cdot (|I|-1))
\end{align*}
The last equality is due to the fact that $I$ is a subset of $J\str$,
and therefore, $I[i]=J\str[i]$. Denote
$q(i)=\K(J\str[i])/\tau(J\str[i])$. In order to find a subset $I$ of
$J\str$ that maximizes $\tightcirc{\R}{\I}(I,J\str)$, we can consider
every size $\ell=0,\dots,|J\str|$ of $I$ and find a subset $I_\ell$
with $|I_\ell|=\ell$ that maximizes the product
$\prod_{i\in I_\ell}q(i)$. Then, we take the $I_\ell$ with the maximal
$\prod_{i\in I_\ell}q(i)\exp(-w(\varphi)\ell(\ell-1))$. In turn, to
find $I_\ell$ it suffices to select $\ell$ tuple identifiers $i$ with
the maximal $q(i)$.
\end{proof}

\subsection{Proof of
  Theorem~\ref{thm:pqa-subset}}\label{sec:proofs_pqa}
\begin{reptheorem}{\ref{thm:pqa-subset}}
	\thmpqasubset
      \end{reptheorem}
\begin{proof}
  Recall the definition of the unnormalized probability
  $\tightcirc{\R}{\I}(I,J\str)$ of $I$.
\begin{align*}
  \tightcirc{\R}{\I}(I,J\str) \,\,=\,\, \D(I)
\times\hskip-0.5em
\prod_{\substack{i\in\tids(J\str)\setminus\\\tids(I)}}\hskip-1em\tau[i](J\str[i])
\times\hskip-0.5em
\prod_{\substack{i\in\tids(\D)\setminus\\\tids(J\str)}}\hskip-1em\tau[i](\bot) 
\,\,\sim\,\,
\D(I)
\times\hskip-0.5em
\prod_{\substack{i\in\tids(J\str)\setminus\\\tids(I)}}\hskip-1em\tau[i](J\str[i])
\end{align*}
We make the following observations.
\begin{itemize}
\item For any constant $p$, as $u\limit\infty$ the mass of $\D$
  concentrates on the databases consistent subsets of $J\str$ (i.e.,
  the ones that satisfy $\Phi$), or, in other words, as
  $u\limit\infty$ the probability that $I$ satisfies $\Phi$ approaches
  $1$.
\item Since $\tau$ assigns the same probability to every tuple, we can
  write
  $\tightcirc{\R}{\I}(I,J\str)\sim \D(I)\cdot C^{|J\str\setminus I|}$
  for some $C\in(0,1)$.  Moreover, as $p$ approaches $1$, the constant
  $C$ approaches $0$.
\item From our definition of $\D$ and $\K$ it follows that every
  consistent subset has the same probability in $\D$ (i.e.,
  $\D(I)=\D(I')$ whenever $I$ and $I'$ are consistent subsets of
  $J\str$).
\end{itemize}
We conclude that if $I$ is cardinality repair and $I'$ is any subset
that is not a cardinality repair (i.e., inconsistent or not maximal in
cardinality), then the ratio between $\tightcirc{\R}{\I}(I,J\str)$ and
$\tightcirc{\R}{\I}(I',J\str)$, and so the ratio between the
probabilities $\U\str(I)$ and $\U\str(I')$, approaches infinity as $p$
approaches $1$.

We conclude that the total probability of the cardinality repairs
approaches $1$ as $p$ approaches $1$, and all cardinality repairs have
the same probability.  In particular, if $t$ is a consistent answer,
then its probability approaches $1$.  Conversely, if $t$ is \e{not} a
consistent answer, then there is a contant portion of the probability
that is missing for every $p$---this is the probability of a
cardinality repair $I$ in which $t\notin Q(I)$.
\end{proof}

%In this section, we provide proofs for our learning results that are missing from the body of
%the paper.
\subsection{Proof of Proposition~\ref{prop:convex_likelihood}}\label{sec:proofs_learning}

\begin{repproposition}{\ref{prop:convex_likelihood}}
\propconvex
\end{repproposition}
\begin{proof}
The negative log-likelihood is:
\begin{align*}
l(\Xi = \mathbf{c},\Theta = \mathbf{d}; (I_j,J_j)_{j=1}^n) = &-\sum_{j=1}^n\log \left(\K(I_j;c)\,\,\times\,\,
\exp\left(
-\sum_{\varphi\in\Phi}
w_\varphi\times |V(\varphi,I_j)|\right)\right) \\
&+ n\log Z(\mathbf{c})\\
&- \sum_{j=1}^n\sum_{\substack{i\in\tids(I_j)}} \log\left(\kappa[i,I_j[i];d](J_j[i])\right)
\end{align*}
Additionally we have that:
\[\K(I)=\prod_{i \in \tids(I)}\K[i](I[i]) = \prod_{i \in \tids(I)}\exp\left(\sum_{f \in F}w_f f(I[i])\right)
\]
and 
\[\kappa[i,I[i]](J[i]) =
\frac{1}{Z_{\kappa}(I[i])}\exp\left(\sum_{g \in G}w_g
  g(I[i],J[i])\right)
\]. 

Replacing these to $l(\mathbf{c},\mathbf{d}; (I_j,J_j)_{j=1}^n)$ we have:
\begin{align*}
l(\Xi = \mathbf{c},\Theta = \mathbf{d}; (I_j,J_j)_{j=1}^n) = &-\sum_{j=1}^n\log \left(\prod_{i \in \tids(I_j)}\exp\left(\sum_{f \in F}w_f f(I_j[i])\right)\right)\\
&+\sum_{j=1}^n
\sum_{\varphi\in\Phi}
w_\varphi|V(\varphi,I_j)|\\
&+ n\log Z(\mathbf{c})\\
&- \sum_{j=1}^n\sum_{\substack{i\in\tids(I_j)}} \log\left(\frac{1}{Z_{\kappa}(I_j[i])}\exp\left(\sum_{g \in G}w_g
  g(I_j[i],J_j[i])\right)\right)
\end{align*}
or
\begin{align*}
l(\Xi = \mathbf{c},\Theta = \mathbf{d}; (I_j,J_j)_{j=1}^n) = &-\sum_{j=1}^n\sum_{i\in \tids{I_j}} \left(\sum_{f \in F}w_f f(I_j[i])\right)\\
&+\sum_{j=1}^n
\sum_{\varphi\in\Phi}
w_\varphi|V(\varphi,I_j)|\\
&+ n\log Z(\mathbf{c})\\
&- \sum_{j=1}^n\sum_{\substack{i\in\tids(I_j)}} \left(\sum_{g \in G}w_g
  g(I_j[i],J_j[i])-\log Z_{\kappa}(I_j[i])\right)
\end{align*}
or
\begin{align*}
l(\Xi = \mathbf{c},\Theta = \mathbf{d}; (I_j,J_j)_{j=1}^n) = &-\sum_{j=1}^n\sum_{i\in \tids{I_j}} \sum_{f \in F}w_f f(I_j[i]) \\
&+\sum_{j=1}^n
\sum_{\varphi\in\Phi}
w_\varphi|V(\varphi,I_j)|\\
&+ n\log \left( \sum_{I} \exp\left(\sum_{i \in \tids(I)}\sum_{f \in F}w_f f(I[i])\right)\right)\\
&+ n\log \left( \sum_{I} \exp\left(-\sum_{\varphi\in\Phi} w_\varphi |V(\varphi,I)|\right)\right)\\
&- \sum_{j=1}^n\sum_{\substack{i\in\tids(I_j)}} \sum_{g \in G}w_g
  g(I_j[i],J_j[i])\\
&+\sum_{j=1}^n\sum_{\substack{i\in\tids(I_j)}}\log \left( \sum_{t^\prime \in \tuples(R^i)}\exp\left(\sum_{g \in
    G}w_g g(I_{j}[i],t^\prime)\right) \right)
\end{align*}
where $R^i$ denotes the relation symbol $R \in \scs$ associated with tuple identifier $i$. Now, recall that vector $\Xi$ contains all weights $w_f$ and $w_\varphi$ while vector $\Theta$ contains all weights $\w_{g}$. We have that the negative log-likelihood is a convex function of $\Xi$ as it corresponds to the sum of LogSumExp functions---those components corresponding to partition functions---with affine functions, which is well-known to be convex~\cite{Boyd:2004:CO:993483}. Similarly for $\Theta$.
\end{proof}

\subsection{Proof of Theorem~\ref{thm:convex_indep}}
Before we prove Theorem~\ref{thm:convex_indep}, we prove two necessary propositions. Recall that we consider Gibbs parfactor/update PUD models with unary constraints. Here, $(I_j,J_j)_{j=1}^n$ is a collection of $n$ i.i.d. examples each of which is associated with one tuple identifier. We use $I_j[0]$ and $J_j[0]$ to denote the tuples in example $(I_j, J_j)$.

First, we show that for Gibbs parfactor/update PUD models with unary constraints the negative log-likelihood $l(\Xi = \mathbf{c},\Theta = \mathbf{d}; (I_j,J_j)_{j=1}^n)$ can be written as $l(\mathbf{c},\mathbf{d}; (I_j,J_j)_{j=1}^n) = \sum_{j=1}^n l^\prime(\mathbf{c},\mathbf{d}; I_j[0],J_j[0]),$ where  each $l^\prime(\mathbf{c},\mathbf{d}; I_j[0],J_j[0])$ is a convex function of $\Xi$ and $\Theta$. Recall that We have:

\def\propdecomp{
Given a collection $(I_j,J_j)_{j=1}^n$ of intention-realization examples and a Gibbs parfactor/update PUD model with unary constraints we have that $l(\mathbf{c},\mathbf{d}; (I_j,J_j)_{j=1}^n) = \sum_{j=1}^n l^\prime(\mathbf{c},\mathbf{d}; I_j[0],J_j[0]),$ where  each $l^\prime(\mathbf{c},\mathbf{d}; I_j[0],J_j[0])$ is a convex function of $\Xi$ and $\Theta$.
}

\begin{proposition}\label{prop:decomp}
\propdecomp
\end{proposition}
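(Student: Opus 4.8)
The plan is to exploit the fact that, once every constraint in $\Phi$ is unary, the Gibbs parfactor/update PUD becomes \emph{tuple-independent}. For a unary $\varphi$ the violation count splits as $|V(\varphi,I)|=\sum_{i\in\tids(I)}\mathbf{1}[\,I[i]\text{ violates }\varphi\,]$; combining this with $\K(I)=\prod_{i\in\tids(I)}\K[i](I[i])$ (no $\bot$-factors appear since, in the learning setup of Section~\ref{sec:learningsetup}, $\tids(\D)=\tids(J\str)$ and $I$ is an update of $J\str$) and with the product form $\tightcirc{\R}{\I}(I,J)=\D(I)\times\prod_{i\in\tids(I)}\kappa[i,I[i]](J[i])$ of a parfactor/update PUD, the unnormalized joint weight of $(I,J)$ factors into per-identifier contributions, and hence so does the partition function: $Z(\mathbf{c})=\prod_{i\in\tids(\D)}Z_i(\mathbf{c})$ with $Z_i(\mathbf{c})=\sum_{t\in\tuples(R_i)}\exp\bigl(\sum_{f\in F}w_f f(t)-\sum_{\varphi\in\Phi}w_\varphi\,\mathbf{1}[t\text{ violates }\varphi]\bigr)$, a finite sum by the finite-domain assumption. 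Writing $\D_i(t)=Z_i(\mathbf{c})^{-1}\exp\bigl(\sum_{f\in F}w_f f(t)-\sum_{\varphi\in\Phi}w_\varphi\mathbf{1}[t\text{ violates }\varphi]\bigr)$, this yields the factorization $\tightcirc{\R}{\I}(I,J)=\prod_{i\in\tids(I)}\D_i(I[i])\cdot\kappa[i,I[i]](J[i])$.

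Next I would use that, in the unary setting, each example $(I_j,J_j)$ carries a single tuple identifier $0$, so $-\log\tightcirc{\R}{\I}(I_j,J_j)$ is exactly one per-identifier term, which we name $l^\prime(\mathbf{c},\mathbf{d};I_j[0],J_j[0])\eqdef -\log\D_{0}(I_j[0])-\log\kappa[0,I_j[0]](J_j[0])$. Summing over $j=1,\dots,n$ gives $l(\mathbf{c},\mathbf{d};(I_j,J_j)_{j=1}^n)=-\sum_{j=1}^n\log\tightcirc{\R}{\I}(I_j,J_j)=\sum_{j=1}^n l^\prime(\mathbf{c},\mathbf{d};I_j[0],J_j[0])$, which is the claimed decomposition; in terms of~\eqref{eq:l}, the term $n\log Z(\mathbf{c})$ splits into one per-example per-identifier partition function $\log Z_j(\mathbf{c})$.

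It remains to verify that each summand is convex in $(\Xi,\Theta)$. Expanding $\D_0$ and $\kappa[0,\cdot]$, and writing $R_j$ for the relation of $I_j[0]$,
\begin{align*}
l^\prime(\mathbf{c},\mathbf{d};I_j[0],J_j[0])
={}& -\sum_{f\in F}w_f f(I_j[0])+\sum_{\varphi\in\Phi}w_\varphi\,\mathbf{1}[I_j[0]\text{ violates }\varphi]+\log Z_j(\mathbf{c})\\
&{} -\sum_{g\in G}w_g\,g(I_j[0],J_j[0])+\log Z_\kappa(I_j[0]),
\end{align*}
where $Z_\kappa(I_j[0])=\sum_{t^\prime\in\tuples(R_j)}\exp\bigl(\sum_{g\in G}w_g\,g(I_j[0],t^\prime)\bigr)$. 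The first line depends only on $\Xi$: its first two summands are affine in $\Xi$, and $\log Z_j(\mathbf{c})$ is a composition of the (convex) log-sum-exp function with affine maps of $\Xi$, so the first line is convex in $\Xi$. Symmetrically, the second line depends only on $\Theta$ and is affine in $\Theta$ plus a log-sum-exp of affine maps of $\Theta$, hence convex in $\Theta$ — exactly the reasoning already used for Proposition~\ref{prop:convex_likelihood}. Since $\Xi$ and $\Theta$ index disjoint coordinate blocks, $l^\prime$ is a sum of a convex function of $\Xi$ and a convex function of $\Theta$, hence jointly convex in $(\Xi,\Theta)$, and convexity is inherited by the finite sum $\sum_{j=1}^n l^\prime$.

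The only step that needs real care is the first one: correctly deriving $Z(\mathbf{c})=\prod_i Z_i(\mathbf{c})$ from the unarity of $\Phi$ and matching it against the $n\log Z(\mathbf{c})$ term of~\eqref{eq:l}, together with checking that each $Z_i(\mathbf{c})$ and each $Z_\kappa(\cdot)$ is a \emph{finite} sum so that the log-sum-exp argument applies — this is precisely where the finite-domain assumption of Section~\ref{sec:learningsetup} is used. Everything after that is routine rewriting plus the standard convexity of log-sum-exp composed with affine maps.
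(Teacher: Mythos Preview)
Your proposal is correct and follows essentially the same approach as the paper: use unarity of $\Phi$ to factor the unnormalized joint weight (and hence the partition function) over tuple identifiers, reduce each single-tuple example to one per-identifier term, and conclude convexity via the affine-plus-log-sum-exp structure. Your treatment of $Z(\mathbf{c})$ is in fact cleaner than the paper's, which splits the intention partition function into two separate $\log\sum_I$ terms (one for the $\K$-features and one for the constraint weights); you correctly keep them combined inside a single $Z_i(\mathbf{c})=\sum_{t}\exp\bigl(\sum_f w_f f(t)-\sum_\varphi w_\varphi\mathbf{1}[t\text{ violates }\varphi]\bigr)$, which is the right object.
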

\begin{proof}
From the proof of Proposition~\ref{prop:convex_likelihood} we have that negative log-likelihood is:
\begin{align*}
l(\Xi = \mathbf{c},\Theta = \mathbf{d}; (I_j,J_j)_{j=1}^n) = &-\sum_{j=1}^n\sum_{i\in \tids{I_j}} \sum_{f \in F}w_f f(I_j[i]) \\
&+\sum_{j=1}^n
\sum_{\varphi\in\Phi}
w_\varphi|V(\varphi,I_j)|\\
&+ n\log \left( \sum_{I} \exp\left(\sum_{i \in \tids(I)}\sum_{f \in F}w_f f(I[i])\right)\right)\\
&+ n\log \left( \sum_{I} \exp\left(-\sum_{\varphi\in\Phi} w_\varphi |V(\varphi,I)|\right)\right)\\
&- \sum_{j=1}^n\sum_{\substack{i\in\tids(I_j)}} \sum_{g \in G}w_g
  g(I_j[i],J_j[i])\\
&+\sum_{j=1}^n\sum_{\substack{i\in\tids(I_j)}}\log \left( \sum_{t^\prime \in \tuples(R^i)}\exp\left(\sum_{g \in
    G}w_g g(I_{j}[i],t^\prime)\right) \right)
\end{align*}
where $R^i$ denotes the relation symbol $R \in \scs$ associated with tuple identifier $i$.
Given that each example $(I_j, J_j)$ corresponds to one tuple identifier we have that:
\begin{align*}
l(\Xi = \mathbf{c},\Theta = \mathbf{d}; (I_j,J_j)_{j=1}^n) = &-\sum_{j=1}^n \sum_{f \in F}w_f f(I_j[0]) \\
&+\sum_{j=1}^n
\sum_{\varphi\in\Phi}
w_\varphi|V(\varphi,I_j[0])|\\
&+ n\log \left( \sum_{I} \exp\left(\sum_{i \in \tids(I)}\sum_{f \in F}w_f f(I[i])\right)\right)\\
&+ n\log \left( \sum_{I} \exp\left(-\sum_{\varphi\in\Phi} w_\varphi |V(\varphi,I)|\right)\right)\\
&- \sum_{j=1}^n \sum_{g \in G}w_g
  g(I_j[0],J_j[0])\\
&+\sum_{j=1}^n\log \left( \sum_{t^\prime \in \tuples(R^i)}\exp\left(\sum_{g \in
    G}w_g g(I_{j}[0],t^\prime)\right) \right)
\end{align*}
The only components that do not immediately decompose over individual tuple identifiers are $\log \left( \sum_{I} \exp\left(\sum_{i \in \tids(I)}\sum_{f \in F}w_f f(I[i])\right)\right)$ and $\log \left( \sum_{I} \exp\left(-\sum_{\varphi\in\Phi} w_\varphi |V(\varphi,I)|\right)\right)$. We next show that for Gibbs parfactor/update PUD models with unary constraints both can be decomposed into a sum over terms that are defined over individual identifiers.

For the first term we have: By sum separation we have that:
\begin{align*}
\sum_{I} \exp\left(\sum_{i \in \tids(I)}\sum_{f \in F}w_f f(I[i])\right) = \prod_{i \in \tids(\D)} \sum_{t \in \tuples(R^i)} \exp \sum_{f \in F}w_ff(t)
\end{align*}
Hence
\begin{align*}
\log \left( \sum_{I} \exp\left(\sum_{i \in \tids(I)}\sum_{f \in F}w_f f(I[i])\right) \right) &= \log \left (\prod_{i \in \tids(\D)} \sum_{t \in \tuples(R^i)} \exp \sum_{f \in F}w_ff(t) \right) \\
&= \sum_{i \in \tids(\D)}\log \left( \sum_{t \in \tuples(R^i)} \exp \sum_{f \in F}w_ff(t) \right)
\end{align*}

For the second term we have: Since we focus on unary constraints, each formula $\varphi \in \Phi$ has only one free variable. Therefore, we can let $grd(i,\varphi)$ denote that formula grounded with tuple identifier $i \in \tids(\D)$, and rewrite our expression as $\log Z(\mathbf{c}) = \log \left( \sum_{I} \exp\left(-\sum_{\varphi\in\Phi}\sum_{\substack{i \in \tids(I) \\ I\models grd(i,\varphi)}} w_\varphi\right)\right)$. However, it holds that $I\models grd(i,\varphi)$ if and only if $I[i]\models grd(i,\varphi)$. From this we have that $\log \left( \sum_{I} \exp\left(-\sum_{\varphi\in\Phi}\sum_{\substack{i \in \tids(I) \\ I[i]\models grd(i,\varphi)}} w_\varphi\right)\right)$.

Now, if we look at the argument of the above logarithm more closely, we notice that the body of the sum can be factored in terms of expressions that only depend on $I[i]$. It follows by sum separation that:
\[
\sum_{I} \exp\left(-\sum_{\varphi\in\Phi}\sum_{\substack{i \in \tids(I) \\ I[i]\models grd(i,\varphi)}} w_\varphi \right) = \prod_{i \in \tids(\D)} \sum_{t \in \tuples(R^i)} \exp \sum_{\varphi \in \Phi} \sum_{I[i]\models grd(i,\varphi)} -w_\varphi
\]
From this we have that:
\begin{align*}
& \log \left( \prod_{i \in \tids(\D)} \sum_{t \in \tuples(R^i)} \exp \sum_{\varphi \in \Phi} \sum_{I[i]\models grd(i,\varphi)} -w_\varphi \right) \\
&= \sum_{i \in \tids(\D)}\log \left( \sum_{t \in \tuples(R^i)} \exp \sum_{\varphi \in \Phi} \sum_{I[i]\models grd(i,\varphi)} -w_\varphi\right)
\end{align*}

Based on our discussion in Section~\ref{sec:learningsetup}, we have that for Gibbs parfactor/update PUD models with unary constraints one can assume without loss of generality that the set of tuple identifiers in $\tids(\D)$ is exactly those present in the training examples $(I_{j},J_{j})_{j=1}^n$. Given all the above we can rewrite the negative log-likelihood as:
\begin{align*}
l(\Xi = \mathbf{c},\Theta = \mathbf{d}; (I_j,J_j)_{j=1}^n) = &-\sum_{j=1}^n \sum_{f \in F}w_f f(I_j[0]) \\
&+\sum_{j=1}^n
\sum_{\varphi\in\Phi}
w_\varphi|V(\varphi,I_j[0])|\\
&+ \sum_{j=1}^n \log \left( \sum_{t \in \tuples(R^i)} \exp \sum_{f \in F}w_ff(t) \right)\\
&+ \sum_{j=1}^n \log \left( \sum_{t \in \tuples(R^i)} \exp \sum_{\varphi \in \Phi} \sum_{I[i]\models grd(i,\varphi)} -w_\varphi\right)\\
&- \sum_{j=1}^n \sum_{g \in G}w_g
  g(I_j[0],J_j[0])\\
&+\sum_{j=1}^n\log \left( \sum_{t^\prime \in \tuples(R^i)}\exp\left(\sum_{g \in
    G}w_g g(I_{j}[0],t^\prime)\right) \right)
\end{align*}

Eventually we have that:
\begin{align*}
l^\prime(\Xi = \mathbf{c},\Theta = \mathbf{d}; I_j[0],J_j[0]) = &-\sum_{f \in F}w_f f(I_j[0]) + \sum_{\varphi\in\Phi}
w_\varphi|V(\varphi,I_j[0])| \\
&-\sum_{g \in G}w_g
  g(I_j[0],J_j[0])\\
&+\log \left( \sum_{t \in \tuples(R^i)} \exp \sum_{f \in F}w_ff(t) \right)\\
&+\log \left( \sum_{t \in \tuples(R)} \exp \sum_{\varphi \in \Phi} \sum_{I[i]\models grd(i,\varphi)} -w_\varphi\right)\\
&+\log \left( \sum_{t^\prime \in \tuples(R)}\exp\left(\sum_{g \in
    G}w_g g(I_{j}[0],t^\prime)\right) \right)
\end{align*}
where $R$ denotes the relation symbol $R \in \scs$ associated with the tuple identifier in $I_{j}[0]$.

Function $l^\prime(\Xi = \mathbf{c},\Theta = \mathbf{d}; I_j[0],J_j[0]$ is convex over $\Xi$ and $\Theta$ as it corresponds to the sum of LogSumExp functions with affine functions.
\end{proof}

Second we show that:
\def\proplinear{
The gradient of  $l^\prime(\mathbf{c},\mathbf{d}; I_j[0],J_j[0])$ with respect to $\Xi$ and $\Theta$ can be evaluated in time linear to $\tuples(R)$ where $R$ is the relation corresponding to the tuple identifier associated with example $(I_{j},J_{j})$.
}

\begin{proposition}\label{prop:linear}
\proplinear
\end{proposition}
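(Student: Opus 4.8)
The plan is to differentiate the closed form of $l'(\mathbf{c},\mathbf{d}; I_j[0],J_j[0])$ obtained in Proposition~\ref{prop:decomp} term by term. Recall that this function is a sum of two kinds of terms: \emph{(i)} affine terms in the parameters, namely $-\sum_{f\in F}w_f f(I_j[0])$, $\sum_{\varphi\in\Phi}w_\varphi|V(\varphi,I_j[0])|$ and $-\sum_{g\in G}w_g g(I_j[0],J_j[0])$; and \emph{(ii)} three single-tuple LogSumExp terms, each a sum over the finite set $\tuples(R)$ --- one for the intention features, one for the (unary) constraint weights, and one for the realization features. The decisive point, already achieved by Proposition~\ref{prop:decomp}, is that the intractable global partition function $Z(\mathbf{c})$ has been replaced by these per-tuple sums, so no summation over databases remains.

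First I would compute the partial derivatives of the affine terms: with respect to a weight, such a derivative is simply its coefficient (e.g.\ $-f(I_j[0])$, $|V(\varphi,I_j[0])|$, or $-g(I_j[0],J_j[0])$), which under the standing assumption that features are polynomial-time computable (Section~\ref{sec:learningsetup}) is obtained in $O(1)$ time with respect to $|\tuples(R)|$. Next I would use the standard identity for the derivative of a LogSumExp to express the derivative of each of the three partition terms as an expectation over $\tuples(R)$: for instance $\partial_{w_f}\log\big(\sum_{t\in\tuples(R)}\exp\sum_{f'\in F}w_{f'}f'(t)\big)=\mathbf{E}_{t\sim\mu}[f(t)]$, where $\mu(t)\propto\exp\sum_{f'}w_{f'}f'(t)$ is exactly the single-tuple intention distribution $\K[i](\cdot)$; analogously, the derivative with respect to $w_\varphi$ of the constraint partition term equals minus the expected violation count of $\varphi$ under the corresponding single-tuple Gibbs distribution, and the derivative with respect to $w_g$ of the realization partition term equals $\mathbf{E}_{t'\sim\kappa[i,I_j[0]]}[g(I_j[0],t')]$.

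Finally I would observe that each of these expectations is computed by a single sweep over the finite set $\tuples(R)$: for every $t\in\tuples(R)$ evaluate the (polynomially many, polynomially computable) features, accumulate the unnormalized weight $\exp(\cdots)$ into a running normalization total and each (feature value)$\times$(weight) into a running accumulator, and perform one final normalization. A single such pass yields all the required expectations --- and hence all partial derivatives of $l'$ with respect to $\Xi$ and $\Theta$ --- in $O(|\tuples(R)|)$ time, up to factors polynomial in the fixed model size ($|F|$, $|\Phi|$, $|G|$ and the per-feature evaluation cost), where $R$ is the relation of the tuple identifier of the example $(I_j,J_j)$. I do not expect a genuine obstacle here; the only point that needs a moment of care is the inner sum $\sum_{I[i]\models grd(i,\varphi)}(-w_\varphi)$ in the constraint partition term: because $\varphi$ is \emph{unary}, $I[i]\models grd(i,\varphi)$ depends only on the single tuple $t=I[i]$, so this inner sum collapses to checking whether $t$ violates $\varphi$ and contributes at most one term --- which is precisely why restricting to unary constraints keeps inference, and therefore the gradient, tractable.
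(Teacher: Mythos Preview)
Your proposal is correct and follows essentially the same approach as the paper's own proof: split $l'$ into its affine terms (whose partial derivatives are just the feature values) and its three single-tuple LogSumExp terms (whose partial derivatives require a sum over $\tuples(R)$), and conclude that one pass over $\tuples(R)$ suffices. In fact your write-up is more explicit than the paper's, which simply asserts that the affine-part derivatives are ``trivial'' feature evaluations and that the LogSumExp-part derivatives ``need to iterate over all tuples $t\in\tuples(R)$ as the expression inside the logarithm appears in the denominator''; your expectation formulation and single-sweep accumulator description make the same point with more detail.
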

\begin{proof}
To compute the gradient of $l^\prime(\Xi = \mathbf{c},\Theta = \mathbf{d}; I_j[0],J_j[0])$ we need to compute the partial derivatives with respect to each parameter $w_f$, $w_\varphi$ and $w_g$. It is trivial to see that to compute these partial derivates for the first three terms of $l^\prime(\Xi = \mathbf{c},\Theta = \mathbf{d}; I_j[0],J_j[0])$ one must evaluate each feature function $f$, $g$ and $V(\varphi, \cdot)$. All of these functions are assumed to be efficiently computable. It is also trivial to see that to compute the partial derivate of each LogSumExp term of $l^\prime(\Xi = \mathbf{c},\Theta = \mathbf{d}; I_j[0],J_j[0])$ one needs to iterate over all tuples $t \in \tuples(R)$ as the expression inside the logarithm appear in the denominator of the each partial derivative. Hence, the time required to compute the gradient for $l^\prime(\Xi = \mathbf{c},\Theta = \mathbf{d}; I_j[0],J_j[0])$ is $O(\tuples(R))$.
\end{proof}

\begin{reptheorem}{\ref{thm:convex_indep}}
\thmconvexindep
\end{reptheorem}

\begin{proof}
We have that $l(\mathbf{c},\mathbf{d}; (I_j,J_j)_{j=1}^n) = \sum_{j=1}^n l^\prime(\mathbf{c},\mathbf{d}; I_j[0],J_j[0])$. To compute the overall gradient of $l$ we need to compute the gradient of each function $l^\prime$. Hence, the overall complexity is $O(n\cdot \max_{R \in \scs}|\tuples(R)|)$. 
\end{proof}

\subsection{Proof of Theorem~\ref{thm:converge}}
\label{sec:convergence}

Before we present the proof for this theorem we discuss some notation that we use for convenience and we also discuss the tools used to prove asymptotic normality. 

First, we switch to matrix notation to denote the Gibbs parametric intention and realization models $\I_{\Xi}$ and $\R_{\Theta}$. It is a simple exercise to show that that parametric models introduced in Section~\ref{sec:learningsetup} can be written as $\I_{\Xi}(I) = \frac{1}{Z_{\I}} \exp (\Xi^T \cdot u_{\I}(I)$ and $\R_{\Theta}(J|I) = \frac{1}{Z_{\R}(I)} \exp (\Theta^T \cdot u_{\R}(I,J)$ where $u_{\I}(\cdot)$ is a vector function that corresponds to the features characterizing $\I_{\Xi}$ and $u_{\R}(\cdot)$ is a vector function that corresponds to the features characterizing $\R_{\Theta}$. These models correspond to the standard {\em exponential family}~\cite{Wainwright:2008:GME:1498840.1498841}.

Vector functions $u_{\I}(\cdot)$ and $u_{\R}(\cdot)$ can also be represented as matrices $u_{\I}: \R^{|\Omega| \times d_{\I}}$ and $u_{\R}: \R^{|\Omega| \times d^{\R}}$ where $\Omega$ is the sample space of our PUD (e.g., if we had a single relation $R$ that would be $\tuples(R)$), $d_{\I}$ is the number of features describing $\I_{\Xi}$, and $d_{\R}$ is the number of features describing $\R_{\Theta}$.

Second, we introduce the notion of {\em Fisher information} of the available training data~\cite{kullback1997information}. The Fisher information determines the amount of information that observed database instances carry about the unknown parameters $\Xi$ and $\Theta$. Intuitively, Fisher information can be interpreted as a measure of how quickly the distribution density will change when we slightly change a parameter in $\theta$ near the optimal $\theta^*$. 

Next, we define the Fisher information of a Gibbs parfactor/update PUD:

\begin{definition}
\label{defnFischerInformation}
The \emph{intention's Fisher information} of a Markov parametric PUD is:
\[
  \mathcal{I}(\Xi)
  =
  \mathbf{Cov}_{I \sim \I_{\Xi}}\left[ u_{\I}(I) \right].
\]
\indent \indent Similarly, the \emph{realizer's Fisher information} of a parametric PUD is:
\[
  \mathcal{I}(\Theta)
  =
  \mathbf{E}_{I \sim \I_{\Xi}}\left[
    \mathbf{Cov}_{J \sim \R_{\Theta}(\cdot | I)}\left[ u_{\R}(J|I) \right]
  \right].
\]
\end{definition}

For general parameter learning, the Fisher information matrices can be
{\em singular}, i.e., our observations carry no information about the
parameters in some direction. Two conditions that lead to singular
Fisher information matrices are: (1) the parameters of
our PUD model are {\em redundant} (e.g., we can have the same formula
listed twice with different weights) or (2) there is a parameter in
our PUD model that has no effect on the distribution (e.g., we have a
parameter associated with a formula that always evaluates to
true). Notice that both cases described above correspond to
misspecified parametric PUD models. 

We now show that for any Gibbs parfactor/update PUD model the Fisher information matrices are positive definite, thus, not singular.
Given that the Fisher information matrices are not singular, and thus invertible, we show the {\em asymptotic normality} of the MLE estimates $\Xi$ and $\Theta$ for tuple independent MLD/update PUDs. 

\begin{lemma}
\label{lemmaFischerPositive}
For any PUD, if matrix $u_{\I}$ is always full-rank, and
similarly for $u_{\R}$, and all parameters $(\Xi, \Theta)$
are finite, then
\[
  \mathcal{I}(\Xi) \succ 0 \text{ and } \mathcal{I}(\Theta) \succ 0.
\]
That is that $\mathcal{I})(\Xi)$ and $\mathcal{I}(\Theta)$ are positive definite and thus invertible.
\end{lemma}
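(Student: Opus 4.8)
The plan is to exploit the elementary fact that a covariance matrix is always positive semidefinite, and is positive definite exactly when no nonzero linear functional of the underlying random vector is almost surely constant. First I would record the two scalar reformulations: for an arbitrary real vector $v$ of the appropriate dimension,
\[
v^{T}\mathcal{I}(\Xi)\,v \;=\; \mathbf{Var}_{I\sim\I_{\Xi}}\!\bigl[\,v^{T}u_{\I}(I)\,\bigr]\;\ge\;0
\]
and
\[
v^{T}\mathcal{I}(\Theta)\,v \;=\; \mathbf{E}_{I\sim\I_{\Xi}}\Bigl[\,\mathbf{Var}_{J\sim\R_{\Theta}(\cdot\mid I)}\!\bigl[\,v^{T}u_{\R}(J\mid I)\,\bigr]\,\Bigr]\;\ge\;0 ,
\]
the first because $\mathcal{I}(\Xi)$ is a covariance matrix and the second because $\mathcal{I}(\Theta)$ is an average of covariance matrices. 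Both right-hand sides are manifestly nonnegative, so positive semidefiniteness is immediate; the whole content of the lemma is to upgrade ``$\ge 0$'' to ``$>0$'' for every $v\neq 0$.

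For $\mathcal{I}(\Xi)$ I would argue via the support of the Gibbs distribution. Because every coordinate of $\Xi$ is finite and every entry of $u_{\I}$ is finite, $\I_{\Xi}(I)=\tfrac{1}{Z_{\I}}\exp(\Xi^{T}u_{\I}(I))>0$ for every $I$ in the (finite) sample space $\Omega$; hence $\I_{\Xi}$ has full support. Consequently $\mathbf{Var}_{I\sim\I_{\Xi}}[v^{T}u_{\I}(I)]=0$ would force $v^{T}u_{\I}(I)$ to equal one and the same constant for \emph{all} $I\in\Omega$, i.e.\ the vector $u_{\I}v$ (viewing $u_{\I}$ as the $|\Omega|\times d_{\I}$ matrix) would be a multiple of the all-ones vector $\mathbf{1}$. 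The full-rank hypothesis on $u_{\I}$ — which, as the discussion of the two singular cases just before the lemma makes explicit, is meant to exclude both redundant features and features that are affine combinations of the others (in particular constant features), i.e.\ to assert that the rows of $u_{\I}$ together with $\mathbf{1}$ are affinely independent — says precisely that $u_{\I}v\in\mathrm{span}(\mathbf{1})$ implies $v=0$. This contradiction yields $\mathcal{I}(\Xi)\succ 0$.

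For $\mathcal{I}(\Theta)$ I would run the same argument one level deeper, using the total-variance form above. Fix $I\in\Omega$; since $\Theta$ is finite and the entries of $u_{\R}(\cdot\mid I)$ are finite, $\R_{\Theta}(\cdot\mid I)$ likewise has full support, so $\mathbf{Var}_{J\sim\R_{\Theta}(\cdot\mid I)}[v^{T}u_{\R}(J\mid I)]=0$ would force $u_{\R}(\cdot\mid I)\,v$ to be constant in $J$, which by the full-rank hypothesis on $u_{\R}$ again forces $v=0$. Hence for $v\neq 0$ the inner variance is strictly positive for every $I$, and averaging these strictly positive numbers with the nonnegative, unit-sum weights $\I_{\Xi}(I)$ keeps the result strictly positive; thus $\mathcal{I}(\Theta)\succ 0$. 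Finiteness of $\Omega$, guaranteed by the finite-domain assumption of Section~\ref{sec:learningsetup}, is what makes all variances finite and the support arguments clean, and since positive-definite matrices are invertible the final assertion follows.

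The one genuinely delicate point — and the step I expect to be the main obstacle — is pinning down the reading of ``$u_{\I}$ (resp.\ $u_{\R}$) is full-rank'' so that it is strong enough to rule out a \emph{constant} nonzero linear statistic (the $\mathrm{span}(\mathbf{1})$ direction), rather than merely ruling out linear dependence among the columns; the two singular-case scenarios listed immediately before the lemma are exactly what I would invoke to justify interpreting the hypothesis as affine independence. Everything else — the covariance/variance reformulations, the positivity of the Gibbs densities, and the law-of-total-variance decomposition of $\mathcal{I}(\Theta)$ — is routine and requires no computation beyond unwinding definitions.
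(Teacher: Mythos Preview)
Your argument is correct and follows the same skeleton as the paper's: covariance matrices are positive semidefinite, singularity forces some $v^{T}u_{\I}(I)$ to be almost surely constant, finiteness of the parameters gives full support so ``a.s.\ constant'' becomes ``identically constant on $\Omega$,'' and the full-rank hypothesis then forces $v=0$. You also treat $\mathcal{I}(\Theta)$ explicitly via the expected conditional variance, which the paper leaves implicit.

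The one place the paper differs is precisely the point you flagged as delicate. Rather than reading ``full rank'' as affine independence of the columns of $u_{\I}$ with $\mathbf{1}$, the paper reads the word \emph{always} in ``$u_{\I}$ is always full-rank'' as quantifying over all equivalent sufficient-statistic representations: if $\phi^{T}u_{\I}(I)\equiv c$, it sets $\bar u_{\I}(I)=u_{\I}(I)-\phi\,\phi^{T}u_{\I}(I)$, observes that this shifted feature vector induces the same Gibbs family (a constant offset is absorbed into the partition function), and notes that $\bar u_{\I}\phi=0$, so $\bar u_{\I}$ is column-rank deficient, contradicting the hypothesis. Your affine-independence reading and the paper's ``any equivalent representation is full rank'' reading are equivalent ways to exclude the $\mathrm{span}(\mathbf{1})$ direction; the paper's device has the advantage of not needing to reinterpret the hypothesis, at the cost of one extra construction.
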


\begin{proof}
  For the intention model, the Fisher information is the covariance of $u_\I(I)$.
  The only way this matrix could be singular is if there is some unit vector $\phi$ such that
  \[
    \mathbf{Var}_{I \sim \I_{\Xi}}[\phi^T u_\I(I)] = 0.
  \]
  This, in turn, will only happen if $\phi^T u_\I(I)$ is constant across all $I$ on which $\I_{\Xi}$ is supported.
  Since $\Xi$ is finite, $\I_{\Xi}$ is supported everywhere on $\Omega$.
  This means that if we define $\bar u_\I$ as
  \[
    \bar u_{\I}(I) = u_\I(I) - \phi \phi^T u_\I(I),
  \]
  then $\bar u_{\I}(I)$ will also be a sufficient statistics function for the same PUD.
  But, $\bar u_{\I}(I)$ is rank deficient, because $\phi^T \bar u_{\I} = 0$.
  But this cannot happen, since we supposed that any sufficient statistics matrix would be full rank.
  Therefore, the Fisher information is positive definite, which is what we wanted to prove.
\end{proof}

Given that the Fisher information matrices are invertible we can now proceed to show asymptotic normality. Before we proceed with our theorem, recall that since we consider Gibbs parfactor/update PUDs with unary constraints each example $(I_{j}, J_{j})$ in the collection of training examples $(I_{j}, J_{j})_{j=1}^n)$ corresponds to an independent single-tuple database example. In the theorem below, we use quantities $\I_{1}(\Xi^*)$ and $\I_{1}(\Theta^*)$ to denote the Fisher information of a Gibbs parfactor/update PUD model over a single-tuple database. We have for asymptotic normality:

\begin{reptheorem}{\ref{thm:converge}}
\thmconverge
\end{reptheorem}

\begin{proof}
  We will prove this by the standard proof technique that is used to prove asymptotic normality.
  The gradient of the negative log-likelihood of an exponential family model with features $u$ and parameters $\theta$ is:
  \begin{align*}
    \nabla f(\theta)
    &=
    -\nabla_{\theta}
    \log\left(
      \frac{\exp(\theta^T u(x))}{\sum_{y \in \Omega} \exp(\theta^T u(y))}
    \right) \\
    &=
    -u(x)
    +
    \frac{\sum_{y \in \Omega} \exp(\theta^T u(y)) u(y)}{\sum_{y \in \Omega} \exp(\theta^T u(y))} =
    -u(x)
    +
    \mathbf{E}_{y \sim \pi_{\theta}}\left[ u(y) \right]
  \end{align*}
  and the Hessian (the matrix that corresponds to the second-order partial derivatives) is
  \begin{align*}
    \nabla^2 f(\theta)
    &=
    \frac{\sum_{y \in \Omega} \exp(\theta^T u(y)) u(y) u(y)^T}{\sum_{y \in \Omega} \exp(\theta^T u(y))}
    \,\, -
    \\ &\hspace{2em}    
    \frac{\sum_{y \in \Omega} \exp(\theta^T u(y)) u(y)}{\sum_{y \in \Omega} \exp(\theta^T u(y))}
    \cdot
    \frac{\sum_{y \in \Omega} \exp(\theta^T u(y)) u(y)^T}{\sum_{y \in \Omega} \exp(\theta^T u(y))} \\
    &=
    \mathbf{E}_{x \sim \pi_{\theta}}\left[ u(x) u(x)^T \right]
    -
    \mathbf{E}_{x \sim \pi_{\theta}}\left[ u(x) \right]
    \mathbf{E}_{x \sim \pi_{\theta}}\left[ u(x) \right]^T =
    \mathbf{Cov}_{x \sim \pi_{\theta}}\left[ u(x) \right].
  \end{align*}
  In the limit, we have that the gradient at the true parameter values is, for training examples $x_1, \ldots, x_n$,
  \begin{align*}
    \nabla f(\theta)
    &=
    \mathbf{E}_{y \sim \pi_{\theta}}\left[ u(y) \right]
    -
    \frac{1}{n}
    \sum_{i=1}^n u(x_i).
  \end{align*}
  By a Taylor expansion, we expand the negative log-likelihood about the true parameters $\theta^\star$ and obtain that:
  \[
    \nabla f(\theta) = H (\theta - \theta^*).
  \]
  From the above and Lemma~\ref{lemmaFischerPositive}, which states that the Fisher information is positive definite and thus invertible, it follows that
  \begin{align*}  
    \theta - \theta^*
    &=
    \I(\theta)^{-1} \left(
      \mathbf{E}_{y \sim \pi_{\theta}}\left[ u(y) \right]
      -
      \frac{1}{n}
      \sum_{i=1}^n u(x_i).
    \right)\\
    &=
    \bigg( \mathbf{Cov}_{x \sim \pi_{\theta}}\big[ u(x) \big] \bigg)^{-1} \bigg(
      \mathbf{E}_{y \sim \pi_{\theta}}\bigg[ u(y) \bigg]
      -
      \frac{1}{n}
      \sum_{i=1}^n u(x_i).
    \bigg)
  \end{align*}
  where $\I(\theta) = \mathbf{Cov}_{x \sim \pi_{\theta}}\left[ u(x) \right]$ is the Fisher information of the model for $x$.
  This has expected value $0$, and covariance
  \[
    \mathbf{Cov}\left[\theta - \theta^* \right]
    =
    \frac{1}{n}
    \mathbf{Cov}_{x \sim \pi_{\theta}}\left[ u(x) \right]^{-1}.
  \]
  If we plug in our Gibbs parfactor/update PUD model to the above we have for the intention model:
  \begin{equation*} 
  	\mathbf{Cov}[\Xi - \Xi^\star] = \frac{1}{n}\mathbf{Cov}_{I\sim\I_{\Xi}}\left[ u_\I(I) \right]^{-1}
  \end{equation*}
  Therefore:
  \begin{equation*}
  	\mathbf{Cov}[\Xi - \Xi^\star] = \frac{1}{n}\mathcal{I}_1(\Xi^\star)^{-1} \text{~as~} n \rightarrow \infty.
  \end{equation*}
  For a conditional exponential family distribution, we have:
  \begin{align*}
    \pi(X | I)
    =
    \frac{
      \exp(\theta^T u(X|I))
    }{
      \sum_Y \exp(\theta^T u(Y|I))
    }.
  \end{align*}
  The gradient of the negative log-likelihood is
  \begin{align*}
    \nabla_{\theta} -\log \pi(X | I)
    &=
    -u(X|I)
    +
    \frac{
      \sum_Y \exp(\theta^T u(Y|I)) u(Y|I)
    }{
      \sum_Y \exp(\theta^T u(Y|I))
    } \\& =
    -u(X|I) + \mathbf{E}_{Y \sim \pi(\cdot|I)}\left[ u(Y|I) \right].
  \end{align*}
  The Hessian is
  \begin{align*}
    \nabla_{\theta}^2 -\log \pi(X | I)
    &=
    \frac{
      \sum_Y \exp(\theta^T u(Y|I)) u(Y|I) u(Y|I)^T
    }{
      \sum_Y \exp(\theta^T u(Y|I))
    }\\
    &-
    \left(
      \frac{
        \sum_Y \exp(\theta^T u(Y|I)) u(Y|I)
      }{
        \sum_Y \exp(\theta^T u(Y|I))
      }
    \right)
    \left(
      \frac{
        \sum_Y \exp(\theta^T u(Y|I)) u(Y|I)
      }{
        \sum_Y \exp(\theta^T u(Y|I))
      }
    \right)^T \\
    &=
    \mathbf{Cov}_{Y \sim \pi(\cdot|I)}\left[ u(Y|I) \right].
  \end{align*}
  In expectation over $I$, this will be
  \begin{align*}
    \nabla^2 f(\theta)
    &=
    \mathbf{E}_I\left[ \mathbf{Cov}_{Y \sim \pi(\cdot|I)}\left[ u(Y|I) \right] \right].
  \end{align*}
  This leaves us with a typical gradient of using samples
  \begin{align*}
    \nabla f(\theta)
    &=
    \frac{1}{n} \sum_{k=1}^n \left(
      \mathbf{E}_{Y \sim \pi(\cdot|I_k)}\left[ u(Y|I_k) \right]
      -
      u(X_k|I_k)
    \right).
  \end{align*}
  This will have expected value $0$, and covariance
  \begin{align*}
    \mathbf{Cov}\left[ \nabla f(\theta) \right]
    &=
    \mathbf{Cov}\left[ 
      \frac{1}{n} \sum_{k=1}^n \left(
        \mathbf{E}_{Y \sim \pi(\cdot|I_k)}\left[ u(Y|I_k) \right]
        -
        u(X_k|I_k)
      \right)
    \right] \\
    &=
    \frac{1}{n}
    \mathbf{Cov}\left[
      \mathbf{E}_{Y \sim \pi(\cdot|I_1)}\left[ u(Y|I_1) \right]
      -
      u(X_1|I_1)
    \right] \\
    &=
    \frac{1}{n}
    \mathbf{E}_{I_1} \left[ \mathbf{Cov}_{X_1} \left[
      \mathbf{E}_{Y \sim \pi(\cdot|I_1)}\left[ u(Y|I_1) \right]
      -
      u(X_1|I_1)
    \right] \right]\\ &=
    \frac{1}{n}
    \mathbf{E}_{I_1} \left[ \mathbf{Cov}_{X_1} \left[ u(X_1|I_1) \right] \right].
  \end{align*}
  So in this setting,
  \[
    \mathbf{Cov}\left[ \theta - \theta^* \right]
    =
    \frac{1}{n}
    \mathbf{E}_z\left[ \mathbf{Cov}_{x \sim \pi_{\theta}}\left[ u(x|z) \right]^{-1} \right].
  \]
  
If we plug in our Gibbs parfactor/update PUD model to the above we have for the realizer model:
  \begin{equation*} 
  	\mathbf{Cov}[\Theta - \Theta^\star] = \frac{1}{n}\mathbf{E}_{I \sim \I_{\Xi}}\mathbf{Cov}_{J\sim\R_{\Theta}(\cdot|I)}\left[ u_\R(J|I) \right]^{-1}
  \end{equation*}
  Therefore:
  \begin{equation*}
  	\mathbf{Cov}[\Theta - \Theta^\star] = \frac{1}{n}\mathcal{I}_1(\Theta^\star)^{-1} \text{~as~} n \rightarrow \infty.
  \end{equation*}
This concludes the proof.
\end{proof}

\subsection{Proof of Theorem~\ref{thm:convexnll}}
For convenience we use the notation introduced at the beginning of Section~\ref{sec:convergence}. We continue with our proof.

\begin{reptheorem}{\ref{thm:convexnll}}
\convexnll
\end{reptheorem}

\begin{proof}
Based on the discussion in Section~\ref{sec:learningsetup}, we have that $(J_{j})_{j=1}^n = J^*$. Given this, the negative log-likelihood with respect to $\Xi$ is:
  \begin{equation*}
    l(\Xi)
    =
    -\log\left(\sum_{I\in \Omega}\I_{\Xi}(I)\cdot\R_{\Theta}(J^\star|I)\right)=
    -\log\left(
      \sum_{I \in \Omega}
      \frac{
        \exp(\Xi^T u_\I(I))
      }{
        \sum_{J \in \Omega} \exp(\Theta^T u_\I(J))
      }
      \cdot \R_{\Theta}(J^\star|I)
    \right).
  \end{equation*}
  The gradient of this is
  \begin{align*}
    \nabla f(\Xi)
    &=
    \frac{
      \sum_{J \in \Omega} \exp(\Xi^T u_\I(J)) \cdot u_\I(J)
    }{
      \sum_{J \in \Omega} \exp(\Xi^T u_\I(J) )
    }
    -
    \frac{
      \sum_{I \in \Omega} \exp(\Xi^T u_\I(I)) \cdot \R_{\Theta}(J^\star|I) u_\I(I)
    }{
      \sum_{I \in \Omega} \exp(\Xi^T u_\I(I)) \cdot \R_{\Theta}(J^\star|I)
    }
  \end{align*}
  and the Hessian is
  \begin{align*}
    &\nabla^2 f(\Xi)
    =
    \frac{
      \sum_{J \in \Omega} \exp(\Xi^T u_\I(J)) \cdot u_\I(J) \cdot u_\I(J)
    }{
      \sum_{J \in \Omega} \exp(\Xi^T u_\I(J))
    }\\
    &-
    \left(
      \frac{
        \sum_{J \in \Omega} \exp(\Xi^T u_\I(J)) \cdot u_\I(J)
      }{
        \sum_{J \in \Omega} \exp(\Xi^T u_\I(J))
      }
    \right)
    \left(
      \frac{
        \sum_{J \in \Omega} \exp(\Xi^T u_\I(J)) \cdot u_\I(J)
      }{
        \sum_{J \in \Omega} \exp(\Xi^T u_\I(J))
      }
    \right)^T\\
    &-
    \frac{
      \sum_{I \in \Omega} \exp(\Xi^T u_\I(I)) \cdot \R_{\Theta}(J^\star|I) \cdot u_\I(I) \cdot u_\I(I)
    }{
      \sum_{I \in \Omega} \exp(\Xi^T u_\I(I)) \cdot \R_{\Theta}(J^\star|I)
    }\\
    &+
    \bigg(
      \frac{
        \sum_{I \in \Omega} \exp(\Xi^T u_\I(I)) \cdot \R_{\Theta}(J^\star|I) \cdot u_\I(I) \cdot u_\I(I)
      }{
        \sum_{I \in \Omega} \exp(\Xi^T u_\I(I)) \cdot \R_{\Theta}(J^\star|I)
      }
    \bigg)\\
    &\quad\,\,    
    \bigg(
      \frac{
        \sum_{I \in \Omega} \exp(\Xi^T u_\I(I)) \cdot \R_{\Theta}(J^\star|I) \cdot u_\I(I) \cdot u_\I(I)
      }{
        \sum_{I \in \Omega} \exp(\Xi^T u_\I(I)) \cdot \R_{\Theta}(J^\star|I)
      }
    \bigg)^T \\
    &=
    \mathbf{Cov}[u_\I(I)] - \mathbf{Cov}[u_\I(I) | J^\star].
  \end{align*}
  By the argument in Lemma~\ref{lemmaFischerPositive}, we know that $\mathbf{Cov}[u_\I(I)] \succ 0$ on all $\Xi $.
  It follows by continuity that there exists a $\delta$ such that on $\Xi$, $\mathbf{Cov}[u_\I(I)] \succ \delta \mathbf{I}$ where $\mathbf{I}$ is the identity matrix.
  On the other hand, if we have a $\R_{\Theta}(J^\star|I)$ such that for each tuple identifier $i \in I$ $I[i] = J^\star[i]$ with probability at least $1-p$, then
  \[
    \mathbf{Cov}[u_\I(I) | J^\star] \preceq p \max_{J \in \Omega} \|u_I(J)\|^2.
  \] 
  It follows that we can choose a $p$ small enough that the Hessian is always positive definite on $\Xi$, which means $f$ is convex.
  This completes the proof.
\end{proof}

 Notice that for the Hessian to be positive definite it must be that $p\cdot \max_{J \in \Omega} \|u_\I(J)\|^2 \prec \delta I$ which in turn means that the value of $p$ depends on the maximum value of features $u_\I$ when computed over $J$. This means that probability $p$ might not be bounded. In Section~\ref{sec:learndirty}, we discuss methods for solving the unsupervised version of PUD learning. These methods require that $p$ is bounded. 
 
Based on the analysis for the above theorem, we show that $p$ is bounded for Gibbs parfactor/update PUD models with unary constraints. We have the following proposition:

\def\bounded{
Given a Gibbs parfactor/update PUD model with unary constraints  
for which the low noise condition holds, then we have for probability $p$ that $p\cdot \|C\|^2 < \delta$
where $C = \max_{R \in \scs, t \in \tuples(R), f \in F, \varphi \in \Phi}(f(t),V(\phi,t))$ and $\delta$ is a constant with $\delta > 0$.
}

\begin{proposition}\label{prop:bounded}
\bounded	
\end{proposition}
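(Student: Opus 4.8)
The plan is to go back to the Hessian computation in the proof of Theorem~\ref{thm:convexnll} and observe that, in the unary-constraint case, the quantity $\max_{J\in\Omega}\norm{u_\I(J)}^2$ that governs how small $p$ must be is itself bounded by a fixed finite constant $\norm{C}^2$ that does not depend on $n$ or on the particular database $J\str$. Concretely, recall from Proposition~\ref{prop:decomp} that for Gibbs parfactor/update PUD models with unary constraints the model is tuple-independent: the relevant sample space $\Omega$ for each (single-tuple) example is $\tuples(R)$ for the relation $R$ associated with that example, and the sufficient-statistics vector $u_\I$ is a function of a single tuple $t\in\tuples(R)$ whose components are exactly the feature values $f(t)$ for $f\in F$ together with the violation counts $V(\varphi,t)$ for $\varphi\in\Phi$ (writing $V(\varphi,t)$ for the number of violations of $\varphi$ in the one-tuple database carrying $t$, which for unary $\varphi$ lies in $\set{0,1}$).

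The next step uses the finiteness assumptions of Section~\ref{sec:learningsetup}: each $\tuples(R)$ is finite, and $\scs$, $F$, and $\Phi$ are finite, so $C=\max_{R\in\scs,\,t\in\tuples(R),\,f\in F,\,\varphi\in\Phi}(f(t),V(\varphi,t))$ is a well-defined finite constant, and $\norm{u_\I(t)}^2\le\norm{C}^2$ for every tuple $t$. Plugging this into the proof of Theorem~\ref{thm:convexnll}, the bound there specializes to $\mathbf{Cov}[u_\I(I)\mid J\str]\preceq p\,\norm{C}^2\,\mathbf{I}$, while by the argument of Lemma~\ref{lemmaFischerPositive}, together with the compactness of the parameter set for $\Xi$, there is a fixed $\delta>0$, independent of $n$ and of $J\str$, with $\mathbf{Cov}_{I\sim\I_{\Xi}}[u_\I(I)]\succ\delta\,\mathbf{I}$ on the entire parameter set. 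The per-example Hessian of the negative log-likelihood is therefore $\succ(\delta-p\,\norm{C}^2)\,\mathbf{I}$, so the aggregate Hessian is positive definite whenever $p\,\norm{C}^2<\delta$. This is exactly the claimed inequality, and it shows that the admissible threshold on $p$ is bounded below by the fixed positive number $\delta/\norm{C}^2$ rather than vanishing as $J\str$ grows.

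The main obstacle is the first step: one must be careful that, thanks to the unary/tuple-independent decomposition of Proposition~\ref{prop:decomp}, the covariance matrices in the Hessian are taken over the single-tuple space $\tuples(R)$ rather than over the full and unbounded space of databases — this is precisely what lets $\norm{u_\I(\cdot)}$ be dominated by the data-independent constant $\norm{C}$, whereas for general constraints a term such as $|V(\varphi,I)|$ grows with $|\tids(I)|$ and forces $p\to 0$. I would also note for rigor that it suffices for each per-example Hessian to be positive semidefinite since the total Hessian is their sum, and that $\delta$ can be taken uniform over all relations by taking the minimum of the finitely many relation-specific constants obtained from Lemma~\ref{lemmaFischerPositive}.
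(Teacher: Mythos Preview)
Your proposal is correct and follows essentially the same approach as the paper: both arguments plug the unary-constraint setting into the Hessian bound from Theorem~\ref{thm:convexnll} and observe that, because the model decomposes over single tuples, $\max_{J}\norm{u_\I(J)}^2$ is controlled by the fixed constant $\norm{C}^2$ rather than growing with $|\tids(\D)|$. Your write-up is in fact more explicit than the paper's (which merely asserts that the maximum ``scales independently of the number of tuple identifiers''), particularly in invoking Proposition~\ref{prop:decomp} to justify working over the single-tuple space $\tuples(R)$ and in noting that $\delta$ can be taken uniform across relations.
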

\begin{proof}
From the proof of Theorem~\ref{thm:convexnll} we have that when the noise condition holds it must be that:
\[
    p\cdot \max_{J \in \Omega}\|u_{\I}(J)\|^2 \prec \delta \mathbf{I}
\]
It is easy to see that in the case of Gibbs parfactor/update PUD models with unary constraints the maximum value for $\|u_{\I}(J)\|^2$ for any $J$ scales independently of the number of tuple identifiers in $\tids(\D)$ and depends only on the features of distribution $\K$ for the intention model $\I_{\Xi}$.
\end{proof}

%%
%% Bibliography
%%

%% Please use bibtex, 

\end{document}